\pgfplotsset{compat=newest}
\newcommand{\rmd}{\mathrm{d}}
\newcommand{\bbE}{\mathbb{E}}\newcommand{\rme}{\mathrm{e}}
\newcommand{\bbR}{\mathbb{R}}
\newcommand{\sfD}{\mathsf{D}}
\newcommand{\bfI}{\mathbf{I}}\newcommand{\sfI}{\mathsf{I}}
\newcommand{\bfN}{\mathbf{N}}
\newcommand{\sfR}{\mathsf{R}}
\newcommand{\bfW}{\mathbf{W}}
\newcommand{\bfX}{\mathbf{X}}\newcommand{\bfx}{\mathbf{x}}
\newcommand{\bfY}{\mathbf{Y}}\newcommand{\bfy}{\mathbf{y}}
\newcommand{\bfZ}{\mathbf{Z}}
\newcommand{\cB}{\mathcal{B}}
\newcommand{\cC}{\mathcal{C}}
\newcommand{\sfh}{\mathsf{h}}
\newcommand{\supp}{{\mathsf{supp}}}
\theoremstyle{mystyle}
\newtheorem{theorem}{Theorem}
\theoremstyle{mystyle}
\newtheorem{lemma}{Lemma}
\theoremstyle{mystyle}
\theoremstyle{mystyle}
\theoremstyle{mystyle}
\newtheorem{definition}{Definition}
\theoremstyle{remark}
\newtheorem{rem}{Remark}
\theoremstyle{mystyle}
\theoremstyle{mystyle}
\theoremstyle{mystyle}
\theoremstyle{discussion}
\theoremstyle{mystyle}
\theoremstyle{mystyle}
\begin{document}

\title{On  the Capacity Achieving Input of Amplitude Constrained Vector Gaussian Wiretap~Channel }

\author{
\IEEEauthorblockN{Antonino Favano$^{* \dagger}$, Luca Barletta$^{*}$,  Alex Dytso$^{**}$}
$^{*}$ Politecnico di Milano, Milano, 20133, Italy. Email: $\{$antonino.favano, luca.barletta$\}$@polimi.it \\
$^{\dagger}$ Consiglio Nazionale delle Ricerche, Milano, 20133, Italy. \\
$^{**}$ New Jersey Institute of Technology, Newark,  NJ 07102, USA.
Email: alex.dytso@njit.edu}
\maketitle
% ----------------------------
\begin{abstract}
This paper studies secrecy-capacity of an $n$-dimensional Gaussian wiretap channel under the peak-power constraint. This work determines the largest peak-power constraint $\bar{\sfR}_n$ such that an input distribution uniformly distributed on a single sphere is optimal; this regime is termed the small-amplitude regime.  The asymptotic of $\bar{\sfR}_n$ as $n$ goes to infinity is completely characterized as a function of noise variance at both receivers.  Moreover, the secrecy-capacity is also characterized in a form amenable for computation. Furthermore, several numerical examples are provided, such as the example of the secrecy-capacity achieving distribution outside of the small amplitude regime.  
\end{abstract}

\section{Introduction} 
Consider the vector Gaussian wiretap channel with outputs 
\begin{align}
\bfY_1&= \bfX+\bfN_1,\\
\bfY_2&=\bfX+\bfN_2,
\end{align}
where  $\bfX \in \bbR^n$ and  where 
$\bfN_1 \sim \mathcal{N}(\mathbf{0}_n,\sigma_1^2 \bfI_n)$ and  $\bfN_2 \sim \mathcal{N}(\mathbf{0}_n,\sigma_2^2 \bfI_n)$, and with $(\bfX,\bfN_1,\bfN_2)$ mutually independent.  The output $\bfY_1$ is observed by the legitimate receiver whereas the output $\bfY_2$ is observed by the malicious receiver.  In this work, we are interested in the scenario where the input $\bfX$  is limited by a peak-power constraint or amplitude constraint and assume that $\bfX \in \cB_0(\sfR)$ where $  \cB_0(\sfR)$ is an $n$-ball centered at ${\bf 0}$ of radius $\sfR$.  
For this setting, the secrecy-capacity is given by 
\begin{align}
C_s(\sigma_1, \sigma_2, \sfR) &= \max_{\bfX \in  \cB_0(\sfR) }  I(\bfX; \bfY_1) - I(\bfX; \bfY_2) \\
&= \max_{\bfX \in  \cB_0(\sfR)}   I(\bfX; \bfY_1 | \bfY_2), \label{eq:Secracy_CAP}
\end{align}
where the last expression holds due to the degraded nature of the channel. 
It can be shown that for $\sigma_1^2  \ge \sigma^2_2$ the secrecy-capacity is equal to zero. Therefore, in the remaining, we assume that $\sigma_1^2 < \sigma^2_2$.

We are interested in studying the input distribution $P_{\bfX^\star}$ that maximizes \eqref{eq:Secracy_CAP} in the small (but not vanishing) amplitude regime.  Since closed-form expression for secrecy-capacity are rare, we are also interested in deriving an exact expression for the secrecy-capacity in this regime. We also argue in Section~\ref{sec:Connection_Other_Problem} the solution to the secrecy-capacity can shed light on other problems unrelated to security.

\subsection{Notation} 
The modified Bessel function of the first kind of order $v \ge 0 $ will be denoted by $\sfI_v(x), x\in \bbR$. The following ratio of the Bessel functions will be commonly used in this work:
\begin{equation}
\sfh_v(x) =\frac{\sfI_v(x)}{\sfI_{v-1}(x)}, x\in \bbR, v\ge 0. 
\end{equation}

We denote the distribution  of a random variable $\bfX$ by $P_{\bfX}$. The support set of $P_\bfX$ is denoted and defined as
\begin{align}
\supp(P_{\bfX})&=\{\bfx:  \text{ for every open set $ \mathcal{D} \ni \bfx $ } \notag\\ 
&\quad \qquad \text{
 we have that $P_{\bfX}( \mathcal{D})>0$} \}. 
\end{align} 
The minimum mean squared error is denoted by 
\begin{align}
{\rm mmse}(\bfX| \bfX+\bfN)=  \bbE \left[ \| \bfX-\bbE[\bfX| \bfX+\bfN] \|^2 \right].
  \end{align}

  \subsection{Literature Review} 
The wiretap channel was introduced by Wyner in \cite{wyner1975wire}, who also established the secrecy-capacity  of the degraded wiretap channel. The wiretap channel plays a central role in network information theory; the interested reader is referred to \cite{bloch2011physical,Oggier2015Wiretap,Liang2009Security,poor2017wireless} and reference therein for an in-detail treatment of the topic.

The secrecy-capacity of a scalar Gaussian wiretap channel with an average-power constraint was shown  in~\cite{GaussianWireTap} where the capacity-achieving input distribution was shown to be Gaussian.  The secrecy-capacity of the MIMO wiretap channel was characterized in \cite{khisti2010secure} and \cite{oggier2011secrecy} where the Gaussian input was shown to be optimal.  An elegant proof, using the I-MMSE relationship \cite{I-MMSE}, of optimality of Gaussian input, was given in \cite{bustin2009mmse}.

The secrecy-capacity of the Gaussian wiretap channel under the peak-power constraint has received far less attention.    The secrecy-capacity of the scalar Gaussian wiretap channel with an amplitude and power constraint was considered  in \cite{ozel2015gaussian} where the authors showed that the capacity-achieving input distribution $P_{X^\star}$ is discrete with finitely many support points. 
Recently, the result of  \cite{ozel2015gaussian}    was sharpened in \cite{barletta2021scalar} by providing an explicit upper bound on the number of support points of $P_{X^\star}$ of the following from: 
\begin{equation}
|  \supp(P_{X^\star}) | \le   \rho \frac{\sfR^2}{\sigma_1^2} + O( \log(\sfR) ), \label{eq:Upper_Bound_Explicit}
 \end{equation} 
where $\rho= (2\rme+1)^2 \left( \frac{\sigma_2+\sigma_1}{ \sigma_2-\sigma_1} \right)^2+ \left(\frac{\sigma_2+\sigma_1}{ \sigma_2-\sigma_1}+1 \right)^2$.  
The secrecy-capacity for the vector wiretap channel with a peak-power constraint was considered in \cite{DytsoITWwiretap2018} where it was shown that the optimal input distribution is concentrated on finitely many co-centric shells.

\section{Assumptions and Motivations } 

\subsection{Assumptions}
\label{sec:Assumptions}
Consider the following function:  for $y \in \mathbb{R}$
\begin{align}
&G_{\sigma_1,\sigma_2,\sfR,n}(y)\notag\\
&=\frac{\bbE\left[\frac{\sfR}{\|y+\bfW\|}\sfh_{\frac{n}{2}}\left(\frac{\sfR}{\sigma_2^2}\| y+\bfW\|\right)-1  \right]}{\sigma_2^2} -\frac{\frac{\sfR}{y}\sfh_{\frac{n}{2}}\left(\frac{\sfR}{\sigma_1^2}y\right) -1 }{\sigma_1^2}, 
	\end{align} 
	where $\bfW \sim  {\cal N}(\mathbf{0}_{n+2},(\sigma_2^2-\sigma_1^2)\bfI_{n+2})$. 
	
	In this work, in order to make progress on the secrecy-capacity, we make the following \emph{assumption} about the ratio of the Bessel functions:    for all $\sfR \ge 0, \sigma_2 \ge \sigma_1 \ge 0$ and $n \in \mathbb{N}$, the function  $y \mapsto G_{\sigma_1,\sigma_2,\sfR,n}(y)$ has \emph{at most} three sign changes. 
	
	In general, proving that $G_{\sigma_1,\sigma_2,\sfR,n}(y)$ has at most  three sign changes is not easy.  	However, extensive numerical evaluations show that this property holds for any $n, \sfR, \sigma_1, \sigma_2$. 
	
	 It is not hard to show that to the function $G_{\sigma_1,\sigma_2,\sfR,n}(y)$ is odd and there is a sign change at $y=0$.  Therefore, the problem boils down to showing that there is at most one sign change for $y>0$. 
Using this, we can give a sufficient condition for this assumption to be true. Note that
\begin{align}
G_{\sigma_1,\sigma_2,\sfR,n}(y)&\ge-\frac{1}{\sigma_2^2}+\frac{1}{\sigma_1^2}-\frac{\sfR}{\sigma_1^2 y}\sfh_{\frac{n}{2}}\left(\frac{\sfR}{\sigma_1^2}y\right) \label{eq:LB_on_h} \\
&\ge -\frac{1}{\sigma_2^2}+\frac{1}{\sigma_1^2}-\frac{\sfR^2}{\sigma_1^4 n}, \label{eq:UB_on_h}
\end{align}
which is nonnegative, hence has no sign change for $y>0$, for 
\begin{equation}
\sfR < \sigma_1^2 \sqrt{n \left(\frac{1}{\sigma_1^2}-\frac{1}{\sigma_2^2}\right)},
\end{equation}
for all $y\ge 0$. The inequality in~\eqref{eq:LB_on_h} follows by $\sfh_{\frac{n}{2}}(x)\ge 0$ for $x\ge 0$; and~\eqref{eq:UB_on_h} follows by $\sfh_{\frac{n}{2}}(x)\le \frac{x}{n}$ for $x\ge 0$ and $n\in \mathbb{N}$.

\subsection{Small Amplitude Regime} 

In this work a small amplitude regime is defined as follows. 
\begin{definition} Let $\bfX_{\sfR} \sim P_{\bfX_{\sfR}}$ be uniform on $\cC(\sfR)=\{ \bfx :  \|\bfx\|=\sfR \}$. The capacity in \eqref{eq:Secracy_CAP}  is said to be in the small amplitude regime if $\sfR \le \bar{\sfR}_n(\sigma_1^2,\sigma_2^2)$ where
\begin{equation}
\bar{\sfR}_n(\sigma_1^2,\sigma_2^2)= \max \left\{ \sfR:  P_{\bfX_{\sfR}} =\arg  \max_{\bfX \in  \cB_0(\sfR)}   I(\bfX; \bfY_1 | \bfY_2)  \right \}. \label{eq:small_amplitude_def}
\end{equation}
If the set in \eqref{eq:small_amplitude_def} is empty, we set $\bar{\sfR}_n(\sigma_1^2,\sigma_2^2)=0$. 
\end{definition}
The quantity $\bar{\sfR}_n(\sigma_1^2,\sigma_2^2)$ represents the largest radius $\sfR$ for which $P_{\bfX_{\sfR}}$ is secrecy-capacity-achieving. 

One of the main objectives of this work is to characterize $\bar{\sfR}_n(\sigma_1^2,\sigma_2^2)$. 
\subsection{Connections to Other Optimization Problems} 
\label{sec:Connection_Other_Problem}

The distribution $ P_{\bfX_{\sfR}}$ occurs in a variety of statistical and information-theoretic applications. For example, consider the following two optimization problems:
\begin{align}
 \max_{\bfX \in  \cB_0(\sfR)}&   I(\bfX; \bfX+\bfN),\\
  \max_{\bfX \in  \cB_0(\sfR)}& {\rm mmse}(\bfX| \bfX+\bfN),
 \end{align} 
where  $\bfN \sim \mathcal{N}(\mathbf{0}_n,\sigma^2 \bfI_n)$. The first problem seeks to characterize the capacity of the point-to-point channel under the amplitude constraint,
 and the second problem seeks to find the largest minimum mean squared error under the assumption that the signal has bounded amplitude; the interested reader is referred to \cite{dytsoMI_est_2019,favano2021capacity,berry1990minimax} for a detailed background on both problems.  
 
Similarly to the wiretap channel, we can define the small amplitude regime for both problems as the largest $\sfR$  such that $ P_{\bfX_{\sfR}}$ is optimal and denote these by $\bar{\sfR}_n^\text{ptp}(\sigma^2)$ and $\bar{\sfR}_n^\text{MMSE}(\sigma^2)$.
We now argue that both $\bar{\sfR}_n^\text{ptp}(\sigma^2)$ and $\bar{\sfR}_n^\text{MMSE}(\sigma^2)$
 can be seen as a special case of the wiretap solution. Hence, the wiretap channel provides and interesting unification and generalization of these two problems.  
 
 First,  note that the point-to-point solution can be recovered from the wiretap  by simply specializing the wiretap channel to the point-to-point channel, that is 
 \begin{align} \label{eq:Rptp}
 \bar{\sfR}_n^\text{ptp}(\sigma^2)= \lim_{\sigma_2 \to \infty} \bar{\sfR}_n(\sigma^2,\sigma_2^2).
 \end{align} 
 Second, to see that the MMSE solution can be recovered from the wiretap  recall that by the I-MMSE relationship \cite{I-MMSE},  we have that 
 \begin{align}
 & \max_{\bfX \in  \cB_0(\sfR) }  I(\bfX; \bfY_1) - I(\bfX; \bfY_2) \notag \\
  &=  \max_{\bfX \in  \cB_0(\sfR) }  \frac{1}{2} \int_{\sigma_1^2}^{\sigma_2^2} \frac{ {\rm mmse}(\bfX| \bfX+ \sqrt{s}\bfZ)}{s^2 } \rmd s
 \end{align} 
 where $\bfZ$ is standard Gaussian.  Now note that if we choose $\sigma_2^2=\sigma_1^2+\epsilon$ for some small enough $\epsilon>0$, we arrive at
 \begin{align}
 & \max_{\bfX \in  \cB_0(\sfR) }  I(\bfX; \bfY_1) - I(\bfX; \bfY_2) \\
 &=    \max_{\bfX \in  \cB_0(\sfR) }   \frac{\epsilon}{2}  \frac{ {\rm mmse}(\bfX| \bfX+ \sqrt{\sigma_1^2}\bfZ)}{\sigma_1^4 }.
 \end{align}  
 Consequently, for a small enough $\epsilon>0$, 
 \begin{equation}\label{eq:reduction_to_mmse}
 \bar{\sfR}_n^\text{MMSE}(\sigma^2)=  \bar{\sfR}_n(\sigma^2,\sigma^2+\epsilon).
 \end{equation} 

\section{Main Results} 

\subsection{Characterizing the Small Amplitude Regime} 

Our first main result characterizes the small amplitude regime.  
\begin{theorem}\label{thm:Char_Small_Amplitude}
Consider a function 
\begin{align}
& f(\sfR)\notag\\
&=\int_{\sigma_1^2}^{\sigma_2^2} \frac{\bbE \left[      \mathsf{h}_{\frac{n}{2}}^2\left(  \frac{\|  \sqrt{s}\bfZ\| \sfR}{s} \right) +     \mathsf{h}_{\frac{n}{2}}^2\left(  \frac{\|  \sfR+\sqrt{s}\bfZ\| \sfR}{s} \right) \right]-1}{s^2} \rmd s
\end{align} 
where $\bfZ \sim {\cal N}(\mathbf{0}_n,\bfI_n)$.
The input $\bfX_{\sfR}$ is secrecy-capacity achieving if  and only if $\sfR \le \bar{\sfR}_n(\sigma_1^2,\sigma_2^2)$ where $\bar{\sfR}_n(\sigma_1^2,\sigma_2^2)$ is given as the zero of 
\begin{equation}
f(\sfR)=0.  \label{eq:Condition_for_optimality}
\end{equation} 

\end{theorem}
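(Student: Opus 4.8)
The plan is to read \eqref{eq:Secracy_CAP} as a concave maximization of the functional $P_{\bfX}\mapsto I(\bfX;\bfY_1\mid\bfY_2)$ over the convex, weak-$*$ compact set of distributions supported on $\cB_0(\sfR)$, and to characterize the optimizer through its Karush--Kuhn--Tucker (KKT) conditions. Degradedness makes this functional concave, so the KKT conditions are both necessary and sufficient. Writing the objective as $\bbE\!\left[\Xi_{P_\bfX}(\bfX)\right]$ with the KKT kernel
\[
\Xi_{P_\bfX}(\bfx)=\opD\!\left(P_{\bfY_1\mid\bfX=\bfx}\,\|\,P_{\bfY_1}\right)-\opD\!\left(P_{\bfY_2\mid\bfX=\bfx}\,\|\,P_{\bfY_2}\right),
\]
a distribution $P^\star$ is optimal iff $\Xi_{P^\star}(\bfx)\le C_s$ on $\cB_0(\sfR)$ with equality on $\supp(P^\star)$. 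I would apply this test to the candidate $P^\star=P_{\bfX_\sfR}$ and show that it passes exactly when $\sfR\le\bar\sfR_n$.

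Since the channel is rotationally invariant and the objective concave, a rotationally invariant maximizer exists, and for $P^\star=P_{\bfX_\sfR}$ both induced output laws are radial; hence $\Xi_{P^\star}(\bfx)$ depends only on $r=\|\bfx\|$, call it $\Xi(r)$. The support is the entire sphere $r=\sfR$, so the equality part of the KKT test is automatic and the whole question becomes whether $\Xi(r)\le\Xi(\sfR)$ for all $r\in[0,\sfR]$. The next step is to compute the radial derivative $\Xi'(r)$ in closed form using the sphere-uniform conditional mean $\bbE[\bfX\mid\bfY]=\sfR\,\sfh_{\frac n2}\!\big(\sfR\|\bfY\|/\sigma^2\big)\,\bfY/\|\bfY\|$, and to identify $\Xi'(r)$ with a positive multiple of $G_{\sigma_1,\sigma_2,\sfR,n}(r)$ (the dimension shift $n\to n+2$ in the definition of $G$ appearing when one differentiates the radial output density and re-expresses it through the Bessel recurrence for $\sfh_{\frac n2}$). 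Granting the standing assumption that $G$ has at most one sign change for $y>0$, the function $\Xi$ is then either monotone or has a single interior trough on $[0,\sfR]$, so its maximum over the interval is attained at an endpoint. This reduces the infinitely many KKT inequalities to the single scalar comparison between the center and the sphere, $\Xi(0)\le\Xi(\sfR)$.

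Finally I would convert this scalar comparison into the stated integral. Fixing the input law $P_{\bfX_\sfR}$ and letting the noise variance $s$ sweep from $\sigma_1^2$ to $\sigma_2^2$, the identity $\Xi(\bfx)=-\int_{\sigma_1^2}^{\sigma_2^2}\partial_s\,\opD\!\big(P_{\bfY_s\mid\bfX=\bfx}\,\|\,P_{\bfY_s}\big)\,\rmd s$, where $\bfY_s=\bfX+\sqrt s\bfZ$, together with the (conditional) I-MMSE / de Bruijn relation, turns each relative entropy into an $s$-integral whose integrand, for a sphere-uniform input, is an explicit function of $\sfh_{\frac n2}$. Evaluating at $r=\sfR$ reproduces, via ${\rm mmse}(\bfX_\sfR\mid\bfX_\sfR+\sqrt s\bfZ)=\sfR^2\big(1-\bbE[\sfh_{\frac n2}^2(\cdots)]\big)$, the second summand of the integrand of $f$, while $r=0$ produces the first summand; after the Bessel simplifications the gap collapses to $\Xi(0)-\Xi(\sfR)=\tfrac{\sfR^2}{2}f(\sfR)$. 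Hence optimality is equivalent to $f(\sfR)\le0$. To close, I would note $f(\sfR)<0$ for small $\sfR$ (the integrand tends to $-1/s^2$) and $f(\sfR)>0$ for large $\sfR$ (since $\sfh_{\frac n2}\to1$), and establish monotonicity of $f$, so that $f$ has a unique zero $\bar\sfR_n$ and $f(\sfR)\le0\iff\sfR\le\bar\sfR_n$.

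The main obstacle is twofold. First, establishing the pointwise I-MMSE representation with the exact squared-Bessel integrand and the precise normalization that yields the ``$-1$'' in $f$: this requires the heat-equation identity $\partial_s p=\tfrac12\Delta p$, Tweedie's score relation $\nabla\log p_{\bfY_s}=\tfrac1s(\bbE[\bfX\mid\bfY_s]-\bfY_s)$, and careful bookkeeping of the divergence terms so that everything reduces to $\sfh_{\frac n2}$ and $\sfh_{\frac n2}^2$. Second, and more delicate, is justifying that the center $r=0$ is the binding competitor: ``at most one sign change of $G$'' only guarantees a single extremum, so one must additionally control the sign of $G$ near the endpoints to exclude an interior \emph{maximum} of $\Xi$ lying above $\Xi(\sfR)$, which is precisely the configuration that would break the reduction to $\Xi(0)\le\Xi(\sfR)$.
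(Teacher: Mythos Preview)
Your overall architecture matches the paper's: KKT for the concave secrecy functional, rotational reduction of $\Xi$ to a radial profile, a derivative/sign-change argument that pins the maximum of $\Xi$ to an endpoint, and then the Verd\'u mismatched-MMSE integral representation to turn $\Xi(0)\le\Xi(\sfR)$ into $f(\sfR)\le 0$. The estimation-theoretic rewrite you sketch via de~Bruijn/Tweedie is equivalent to the paper's direct invocation of \eqref{eq:mistmatchedMMSE_formulat} together with the closed-form conditional mean \eqref{eq:ConditionalExpectation}, so there is no substantive difference there.

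There is, however, a concrete gap in your derivative step. You write that $\Xi'(r)$ is ``a positive multiple of $G_{\sigma_1,\sigma_2,\sfR,n}(r)$.'' This is not what one gets. The computation (Lemma~\ref{Lemma:derivative_Xi} in the paper) produces
\[
\Xi'(r;P_{\bfX_\sfR})=r\,\bbE\!\left[G_{\sigma_1,\sigma_2,\sfR,n}\!\big(\sigma_1 Q_{n+2}\big)\right],
\]
where $Q_{n+2}^2$ is noncentral $\chi^2_{n+2}$ with noncentrality $r^2/\sigma_1^2$; i.e., $\Xi'(r)$ is an integral of $G$ against a kernel that \emph{depends on $r$}, not a pointwise multiple. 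Consequently, ``$G$ has at most one sign change for $y>0$'' does not immediately bound the sign changes of $r\mapsto\Xi'(r)$. The missing ingredient is Karlin's variation-diminishing theorem (Lemma~\ref{lem:numberOfOscillations} in the paper): the noncentral $\chi^2$ density is a strictly totally positive kernel, so integrating against it cannot increase the number of sign changes. With that lemma in place, one sign change of $G$ on $(0,\infty)$ yields at most one sign change of $\Xi'$ on $(0,\infty)$, which together with the elementary bound $\Xi'(r)>0$ for large $r$ (this is exactly how the paper disposes of your second obstacle; see \eqref{eq:applyboundsonh}--\eqref{eq:applyboundsonh1}) forces any interior critical point of $\Xi$ to be a minimum and pushes the maximum to $\{0,\sfR\}$.
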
 

\begin{rem} Note that \eqref{eq:Condition_for_optimality} always has a solution. To see this observe that  $f(0)=\frac{1}{\sigma_2^2}-\frac{1}{\sigma_1^2}<0$, and $f(\infty)=\frac{1}{\sigma_1^2}-\frac{1}{\sigma_2^2}>0$. Moreover, the solution is unique, because $f(\sfR)$ is monotonically increasing for $\sfR\ge 0$.
\end{rem}

The solution to \eqref{eq:Condition_for_optimality} needs to be found numerically.\footnote{To avoid any loss of accuracy in the numerical evaluation of $\sfh_v(x)$ for large values of $x$, we used the exponential scaling provided in the MATLAB implementation of $\sfI_v(x)$.} Since evaluating $f(\sfR)$ is rather straightforward and not time-consuming, we opted for a binary search algorithm.
\begin{table}
\caption{Values of $\bar{\sfR}_n^{\text{ptp}}(1)$, $\bar{\sfR}_n ( 1,\sigma_2^2 )$, and $\bar{\sfR}_n^{\text{MMSE}}(1)$}
\def\arraystretch{1.25}%
\[\begin{tabular}{l  c c c c c c }
\toprule
  $n$                   		& 1     & 2     & 4     & 8     & 16    & 32     \\
\midrule
$\bar{\sfR}_n^{\text{ptp}}(1)$  & 1.666 & 2.454 & 3.580 & 5.158 & 7.367 & 10.472 \\
$\bar{\sfR}_n(1,1000)$  		& 1.664 & 2.450 & 3.575 & 5.151 & 7.357 & 10.458 \\    
$\bar{\sfR}_n(1,10)$    		& 1.518 & 2.221 & 3.229 & 4.646 & 6.632 & 9.424  \\
$\bar{\sfR}_n(1,1.5)$   		& 1.161 & 1.687 & 2.444 & 3.513 & 5.013 & 7.124  \\
$\bar{\sfR}_n(1,1.001)$ 		& 1.057 & 1.535 & 2.224 & 3.196 & 4.561 & 6.481  \\
$\bar{\sfR}_n^{\text{MMSE}}(1)$ & 1.057 & 1.535 & 2.223 & 3.195 & 4.560 & 6.479  \\

\end{tabular}\]
\label{Table1}
\vspace{-0.4cm}
\end{table}

 In Table~\ref{Table1}, we show the values of $\bar{\sfR}_n ( 1,\sigma_2^2 )$ for some values of~$\sigma_2^2$ and~$n$. Moreover, we report the values of $\bar{\sfR}_n^{\text{ptp}}(1)$ and $\bar{\sfR}_n^{\text{MMSE}}(1)$ from~\cite{dytsoMI_est_2019} in the first and the last row, respectively. As predicted by~\eqref{eq:Rptp}, we can appreciate the close match of the $\bar{\sfR}_n^{\text{ptp}}(1)$ row with the one of $\bar{\sfR}_n(1,1000)$. Similarly, the agreement between the $\bar{\sfR}_n^{\text{MMSE}}(1)$ row and the $\bar{\sfR}_n(1,1.001)$ row is justified by~\eqref{eq:reduction_to_mmse}.

\subsection{Large $n$ Asymptotics} 

We now use the result in Theorem~\ref{thm:Char_Small_Amplitude} to characterize the asymptotic behavior of $\bar{\sfR}_n(\sigma_1^2,\sigma_2^2)$. 
\begin{theorem}\label{thm:large_n_beh}
\begin{align} \label{eq:c_asym}
\lim_{n \to \infty} \frac{\bar{\sfR}_n(\sigma_1^2,\sigma_2^2)}{\sqrt{n}}=c(\sigma_1^2,\sigma_2^2),
\end{align}
where  $c(\sigma_1^2,\sigma_2^2)$ is the solution of 
\begin{equation}
 \int_{\sigma_1^2}^{\sigma_2^2} \frac{{ \frac{c^2 }{ \left(  \frac{\sqrt{s}}{2}+\sqrt{ \frac{s}{4} + c^2} \right)^2}} +      \frac{ c^2 (c^2+ s)}{ \left( \frac{s}{2}+\sqrt{ \frac{s^2}{4} +c^2( c^2+ s)  } \right)^2} -1}{s^2} \rmd s =0.
\end{equation} 
\end{theorem}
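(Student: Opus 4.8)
The plan is to insert the scaling $\sfR = c\sqrt{n}$ directly into the characterization $f(\sfR)=0$ of Theorem~\ref{thm:Char_Small_Amplitude} and pass to the limit $n\to\infty$. Writing $r_n = \bar{\sfR}_n(\sigma_1^2,\sigma_2^2)/\sqrt{n}$, the number $r_n$ is the unique zero of $c\mapsto f(c\sqrt{n})$ (uniqueness and monotonicity coming from the remark after Theorem~\ref{thm:Char_Small_Amplitude}), and the goal is to show $r_n \to c$ with $c$ the stated root. Here the vector "$\sfR$" in the integrand is any fixed $\bfx$ with $\|\bfx\|=\sfR$, the value being rotation-invariant. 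I would first establish the pointwise limit
\[
\lim_{n\to\infty} f(c\sqrt{n}) = F(c) := \int_{\sigma_1^2}^{\sigma_2^2} \frac{h_1^2(s,c)+h_2^2(s,c)-1}{s^2}\,\rmd s ,
\]
where $h_1^2,h_2^2$ are exactly the two summands appearing inside the integral in the theorem, and then promote this pointwise convergence to convergence of the zeros.

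Two asymptotic ingredients drive the pointwise limit. \emph{(i) Concentration of the norms.} Since $\|\bfZ\|^2\sim\chi^2_n$, the weak law gives $\|\sqrt{s}\bfZ\|/\sqrt{n}\to\sqrt{s}$ in probability, and expanding $\|\bfx+\sqrt{s}\bfZ\|^2 = \sfR^2 + 2\sqrt{s}\langle\bfx,\bfZ\rangle + s\|\bfZ\|^2$ with $\|\bfx\|=\sfR=c\sqrt{n}$ shows the cross term is $O_P(\sqrt{n})$, hence negligible relative to $\sfR^2=c^2 n$, so $\|\bfx+\sqrt{s}\bfZ\|/\sqrt{n}\to\sqrt{c^2+s}$ in probability. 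Consequently the two Bessel arguments, divided by the order $\nu=n/2$, converge in probability (uniformly for $s\in[\sigma_1^2,\sigma_2^2]$) to $\tau_1 = 2c/\sqrt{s}$ and $\tau_2 = 2c\sqrt{c^2+s}/s$. \emph{(ii) Asymptotics of the Bessel ratio.} I would show that as $\nu\to\infty$ with $x/\nu\to\tau$,
\[
\sfh_\nu(x) \to g(\tau) := \frac{\sqrt{1+\tau^2}-1}{\tau} = \frac{\tau}{1+\sqrt{1+\tau^2}} .
\]
The limit is suggested by the three-term recurrence, which yields $\frac{1}{\sfh_\nu(x)} - \sfh_{\nu+1}(x) = \frac{2\nu}{x}$ and hence the fixed-point equation $\frac{1}{h}-h=\frac{2}{\tau}$ with root $g(\tau)$; to make it rigorous I would invoke Amos-type two-sided bounds on $\sfh_\nu$, whose upper and lower envelopes both converge to $g(\tau)$ and give uniform control on compact $\tau$-sets. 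Substituting $\tau_1,\tau_2$ into $g(\cdot)^2$ and rationalizing reproduces precisely $h_1^2 = 4c^2/(\sqrt{s}+\sqrt{s+4c^2})^2$ and the analogous $h_2^2$, matching the theorem's integrand term by term.

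The interchange of limits is then routine. Because $0<\sfh_\nu(x)<1$ for $x>0$, the integrand of $f$ is bounded by $1/s^2\le 1/\sigma_1^4$ uniformly in $n$, so dominated convergence handles the $s$-integral, while bounded convergence handles each expectation $\bbE[\sfh_{n/2}^2(\cdot)]\to h_i^2$ once its argument is shown to concentrate at $\tau_i$, where $g$ is continuous. This gives $f(c\sqrt{n})\to F(c)$ for every fixed $c>0$. A short check shows $F$ is continuous and, since each $h_i^2$ is increasing in $c$, strictly increasing, with $F(0^+)=\frac{1}{\sigma_2^2}-\frac{1}{\sigma_1^2}<0$ and $F(\infty)=\frac{1}{\sigma_1^2}-\frac{1}{\sigma_2^2}>0$; hence $F$ has a unique positive root, which is the claimed $c(\sigma_1^2,\sigma_2^2)$.

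Finally, to convert the pointwise limit into $r_n\to c$, I would exploit the monotonicity of $f$. Fix $\epsilon>0$; since $F(c-\epsilon)<0<F(c+\epsilon)$ and $f((c\pm\epsilon)\sqrt{n})\to F(c\pm\epsilon)$, for all large $n$ we have $f((c-\epsilon)\sqrt{n})<0<f((c+\epsilon)\sqrt{n})$, and monotonicity of $f$ together with uniqueness of its zero $\bar{\sfR}_n$ forces $(c-\epsilon)\sqrt{n}<\bar{\sfR}_n<(c+\epsilon)\sqrt{n}$, i.e.\ $|r_n-c|<\epsilon$. I expect the main obstacle to be step~(ii): turning the recurrence-based heuristic for $g(\tau)$ into a genuinely uniform limit for $\sfh_{n/2}$ evaluated at a \emph{random, $n$-dependent} argument, and ensuring the uniform Bessel bounds are strong enough to survive composition with the concentrating norms across the entire interval $[\sigma_1^2,\sigma_2^2]$.
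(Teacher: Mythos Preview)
Your proposal is correct and follows essentially the same approach as the paper: set $\sfR=c\sqrt{n}$, use two-sided Amos--Segura type bounds on $\sfh_\nu$ to obtain the limit $g(\tau)$, combine with the (strong) law of large numbers for $\|\bfZ\|^2/n$ and $\|\bfx+\sqrt{s}\bfZ\|^2/n$, and pass the limit through via dominated convergence. Your worry about step~(ii) is precisely what those explicit bounds resolve, and your final bracketing argument (using monotonicity of $f$ to deduce $r_n\to c$ from $f(c\sqrt{n})\to F(c)$) is a detail the paper leaves implicit but which you handle cleanly.
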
 
\begin{IEEEproof}
See Section~\ref{sec:large_n_beh}. 
\end{IEEEproof}%
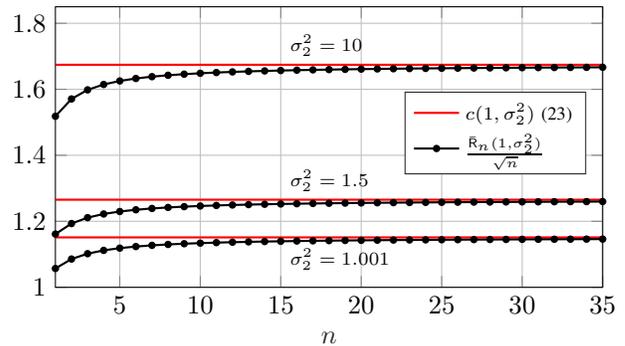
\begin{figure}[t]
	\centering
	\begin{tikzpicture}

\begin{axis}[%
width=\linewidth,
height=0.6\linewidth,
xmin=1,
xmax=35,
xlabel style={font=\color{white!15!black}},
xlabel={$n$},
ymin=1,
ymax=1.85,
ylabel style={font=\color{white!15!black}},
axis background/.style={fill=white},
xmajorgrids,
ymajorgrids,
legend style={legend cell align=left, align=left, draw=white!15!black,at={(0.97,0.55)},anchor=east}
]
\addplot [color=red, line width=0.8pt]
  table[row sep=crcr]{%
1	1.15125151780705\\
2	1.15125151780705\\
3	1.15125151780705\\
4	1.15125151780705\\
5	1.15125151780705\\
6	1.15125151780705\\
7	1.15125151780705\\
8	1.15125151780705\\
9	1.15125151780705\\
10	1.15125151780705\\
11	1.15125151780705\\
12	1.15125151780705\\
13	1.15125151780705\\
14	1.15125151780705\\
15	1.15125151780705\\
16	1.15125151780705\\
17	1.15125151780705\\
18	1.15125151780705\\
19	1.15125151780705\\
20	1.15125151780705\\
21	1.15125151780705\\
22	1.15125151780705\\
23	1.15125151780705\\
24	1.15125151780705\\
25	1.15125151780705\\
26	1.15125151780705\\
27	1.15125151780705\\
28	1.15125151780705\\
29	1.15125151780705\\
30	1.15125151780705\\
31	1.15125151780705\\
32	1.15125151780705\\
33	1.15125151780705\\
34	1.15125151780705\\
35	1.15125151780705\\
};
\addlegendentry{\scriptsize $c(1,\sigma_2^2) \ \eqref{eq:c_asym}$}

\addplot [color=black, line width=0.8pt, mark=*, mark size =1pt, mark options={solid, black}]
  table[row sep=crcr]{%
1	1.05700754961514\\
2	1.08567631436701\\
3	1.10184786913882\\
4	1.11190914815219\\
5	1.11867291122447\\
6	1.12349710272807\\
7	1.12709773157121\\
8	1.12988199389348\\
9	1.13209646405149\\
10	1.13389840141803\\
11	1.13539249136889\\
12	1.13665098589732\\
13	1.1377252939573\\
14	1.13865288306867\\
15	1.13946185044196\\
16	1.14017348091489\\
17	1.14080429167319\\
18	1.14136728158719\\
19	1.14187282962213\\
20	1.1423292467616\\
21	1.1427433827209\\
22	1.14312083565491\\
23	1.14346628037165\\
24	1.14378358654872\\
25	1.14407607154453\\
26	1.14434654435366\\
27	1.14459736408627\\
28	1.1448306274395\\
29	1.14504810200458\\
30	1.14525134228772\\
31	1.14544169089822\\
32	1.14562035265051\\
33	1.14578835697793\\
34	1.14594661770954\\
35	1.14609598552094\\
};
\addlegendentry{\scriptsize $\frac{\Bar{\mathsf{R}}_n(1,\sigma_2^2)}{\sqrt{n}}$}

\node[below right, align=left]
at (axis cs:15,1.147) {\scriptsize $\sigma_2^2 = 1.001$};
\addplot [color=red, line width=0.8pt, forget plot]
  table[row sep=crcr]{%
1	1.26546217419275\\
2	1.26546217419275\\
3	1.26546217419275\\
4	1.26546217419275\\
5	1.26546217419275\\
6	1.26546217419275\\
7	1.26546217419275\\
8	1.26546217419275\\
9	1.26546217419275\\
10	1.26546217419275\\
11	1.26546217419275\\
12	1.26546217419275\\
13	1.26546217419275\\
14	1.26546217419275\\
15	1.26546217419275\\
16	1.26546217419275\\
17	1.26546217419275\\
18	1.26546217419275\\
19	1.26546217419275\\
20	1.26546217419275\\
21	1.26546217419275\\
22	1.26546217419275\\
23	1.26546217419275\\
24	1.26546217419275\\
25	1.26546217419275\\
26	1.26546217419275\\
27	1.26546217419275\\
28	1.26546217419275\\
29	1.26546217419275\\
30	1.26546217419275\\
31	1.26546217419275\\
32	1.26546217419275\\
33	1.26546217419275\\
34	1.26546217419275\\
35	1.26546217419275\\
};
\addplot [color=black, line width=0.8pt, mark=*, mark size =1pt, mark options={solid, black}, forget plot]
  table[row sep=crcr]{%
1	1.16127361761653\\
2	1.19315741430016\\
3	1.21106514355155\\
4	1.22217627600009\\
5	1.2296336837499\\
6	1.23494726245077\\
7	1.23891067505508\\
8	1.2419741272445\\
9	1.24440997324439\\
10	1.24639161344061\\
11	1.24803444050257\\
12	1.24941806043064\\
13	1.25059902990421\\
14	1.25161869229161\\
15	1.2525078699636\\
16	1.25329001221689\\
17	1.25398330182204\\
18	1.25460204908811\\
19	1.25515761954344\\
20	1.25565921147753\\
21	1.2561143261815\\
22	1.25652909690446\\
23	1.25690868414368\\
24	1.25725736984404\\
25	1.25757874778852\\
26	1.25787596143686\\
27	1.25815156205166\\
28	1.25840788367657\\
29	1.25864684574435\\
30	1.25887015119169\\
31	1.25907929901797\\
32	1.2592755979502\\
33	1.2594601888743\\
34	1.25963408253852\\
35	1.2597982055379\\
};
\node[above right, align=left]
at (axis cs:15,1.262) {\scriptsize $\sigma_2^2 = 1.5$};
\addplot [color=red, line width=0.8pt, forget plot]
  table[row sep=crcr]{%
1	1.67419101387548\\
2	1.67419101387548\\
3	1.67419101387548\\
4	1.67419101387548\\
5	1.67419101387548\\
6	1.67419101387548\\
7	1.67419101387548\\
8	1.67419101387548\\
9	1.67419101387548\\
10	1.67419101387548\\
11	1.67419101387548\\
12	1.67419101387548\\
13	1.67419101387548\\
14	1.67419101387548\\
15	1.67419101387548\\
16	1.67419101387548\\
17	1.67419101387548\\
18	1.67419101387548\\
19	1.67419101387548\\
20	1.67419101387548\\
21	1.67419101387548\\
22	1.67419101387548\\
23	1.67419101387548\\
24	1.67419101387548\\
25	1.67419101387548\\
26	1.67419101387548\\
27	1.67419101387548\\
28	1.67419101387548\\
29	1.67419101387548\\
30	1.67419101387548\\
31	1.67419101387548\\
32	1.67419101387548\\
33	1.67419101387548\\
34	1.67419101387548\\
35	1.67419101387548\\
};
\addplot [color=black, line width=0.8pt, mark=*, mark size =1pt, mark options={solid, black}, forget plot]
  table[row sep=crcr]{%
1	1.51816337717656\\
2	1.57069402064057\\
3	1.59822577718413\\
4	1.61456083937775\\
5	1.62523074249726\\
6	1.63270567994323\\
7	1.63821954926654\\
8	1.64244899251058\\
9	1.64579347475647\\
10	1.64850318873694\\
11	1.65074252415981\\
12	1.6526238200227\\
13	1.654226388971\\
14	1.65560773234257\\
15	1.65681066364209\\
16	1.65786754577848\\
17	1.6588034487503\\
18	1.65963796851718\\
19	1.66038672797188\\
20	1.66106230884444\\
21	1.66167490812488\\
22	1.66223291677004\\
23	1.66274331915622\\
24	1.66321198947868\\
25	1.66364378974108\\
26	1.66404297491439\\
27	1.66441300778254\\
28	1.66475706820989\\
29	1.66507772436275\\
30	1.66537731120684\\
31	1.66565784212978\\
32	1.66592107114358\\
33	1.6661685346091\\
34	1.66640166278123\\
35	1.66662157904352\\
};
\node[above right, align=left]
at (axis cs:15,1.673) {\scriptsize $\sigma_2^2 = 10$};

\end{axis}

\end{tikzpicture}%
	\caption{Asymptotic behavior of $\Bar{\mathsf{R}}_n(1,\sigma_2^2)/\sqrt{n}$ versus $n$ for $\sigma_1^2 = 1$ and $\sigma_2^2 = 1.001,1.5,10$.}
	\label{fig:asymRn}
\end{figure}%
In Fig.~\ref{fig:asymRn}, for $\sigma_1^2 = 1$ and $\sigma_2^2 = 1.001,1.5,10$, we show the behavior of $\bar{\sfR}_n(1,\sigma_2^2)/\sqrt{n}$ and how its asymptotic converges to $c(1,\sigma_2^2)$.

\subsection{Capacity Expression in the Small Amplitude Regime} 
The result in Theorem~\ref{thm:Char_Small_Amplitude} can also be used to establish the secrecy-capacity for all $\sfR \le \bar{\sfR}_n(\sigma_1^2,\sigma_2^2)$ as is done next. 

\begin{theorem}\label{thm:Capacitiy_Small}  If $\sfR \le \bar{\sfR}_n(\sigma_1^2,\sigma_2^2)$, then 
\begin{equation} \label{eq:Cs}
C_s(\sigma_1^2, \sigma_2^2, \sfR)=  \frac{1}{2} \int_{\sigma_1^2}^{\sigma_2^2} \frac{\sfR^2 -\sfR^2\bbE \left[      \mathsf{h}_{\frac{n}{2}}^2\left(  \frac{\|  \sfR+\sqrt{s}\bfZ\| \sfR}{s} \right) \right] }{s^2} \rmd s.
\end{equation}
\end{theorem}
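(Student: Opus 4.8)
The plan is to leverage Theorem~\ref{thm:Char_Small_Amplitude}: when $\sfR \le \bar{\sfR}_n(\sigma_1^2,\sigma_2^2)$ the uniform input $\bfX_\sfR$ on the sphere $\cC(\sfR)$ is secrecy-capacity achieving, so it suffices to \emph{evaluate} the objective at this single input rather than optimize. I would start from the I-MMSE representation already displayed in the excerpt, which for the input $\bfX_\sfR$ reads
\begin{equation*}
C_s(\sigma_1^2,\sigma_2^2,\sfR)=\frac{1}{2}\int_{\sigma_1^2}^{\sigma_2^2}\frac{{\rm mmse}(\bfX_\sfR| \bfX_\sfR+\sqrt{s}\bfZ)}{s^2}\,\rmd s .
\end{equation*}
Thus the entire problem reduces to computing the MMSE of a uniform-on-the-sphere input corrupted by Gaussian noise of variance $s$, and then performing the substitution into this integral.

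The core computational step is the conditional mean estimator $\bbE[\bfX_\sfR| \bfY=\bfy]$ with $\bfY=\bfX_\sfR+\sqrt{s}\bfZ$. Writing the posterior on the sphere $\{\|\bfx\|=\sfR\}$, the uniform prior is constant and the Gaussian likelihood contributes a factor proportional to $\exp(\langle\bfy,\bfx\rangle/s)$, since $\|\bfy-\bfx\|^2=\|\bfy\|^2-2\langle\bfy,\bfx\rangle+\sfR^2$ and the outer two terms are constant on the sphere. This is a von Mises--Fisher posterior, so by rotational symmetry the estimator is aligned with $\bfy$, and its magnitude is obtained by integrating $\cos\theta$ against $\exp\!\big((\sfR\|\bfy\|/s)\cos\theta\big)$ with the spherical polar weight $(\sin\theta)^{n-2}$. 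Matching the resulting normalization and first moment to the integral representation $\sfI_v(z)\propto\int_0^\pi e^{z\cos\theta}(\sin\theta)^{2v}\,\rmd\theta$ (with $2v=n-2$ for the normalizer and $v=\tfrac{n}{2}$ for the moment) yields
\begin{equation*}
\bbE[\bfX_\sfR| \bfY=\bfy]=\sfR\,\frac{\bfy}{\|\bfy\|}\,\sfh_{\frac{n}{2}}\!\left(\frac{\sfR\|\bfy\|}{s}\right).
\end{equation*}

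From here the MMSE follows quickly. Because $\|\bfX_\sfR\|=\sfR$ almost surely, the orthogonality principle gives
\begin{equation*}
{\rm mmse}(\bfX_\sfR| \bfY)=\bbE\!\left[\|\bfX_\sfR\|^2\right]-\bbE\!\left[\|\bbE[\bfX_\sfR| \bfY]\|^2\right]=\sfR^2-\sfR^2\,\bbE\!\left[\sfh_{\frac{n}{2}}^2\!\left(\frac{\sfR\|\bfY\|}{s}\right)\right].
\end{equation*}
To put the remaining expectation in the stated form, I would use rotational invariance to condition on an arbitrary fixed point of the sphere, say $\bfX_\sfR=\sfR\bfe_1$, so that $\bfY$ has the law of $\sfR\bfe_1+\sqrt{s}\bfZ$, which is exactly the quantity abbreviated $\|\sfR+\sqrt{s}\bfZ\|$ in~\eqref{eq:Cs}. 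Substituting this MMSE into the I-MMSE integral reproduces~\eqref{eq:Cs} verbatim.

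The only delicate point is the Bessel-function identity for the posterior mean; everything else is direct substitution. I expect this to be the main obstacle, and I would either invoke the standard von Mises--Fisher moment formula or derive it by reducing the two surface integrals to one-dimensional integrals over the polar angle and recognizing the modified Bessel integrals of orders $\tfrac{n}{2}-1$ and $\tfrac{n}{2}$. One should also emphasize that it is precisely the optimality of $\bfX_\sfR$ furnished by Theorem~\ref{thm:Char_Small_Amplitude} that licenses evaluating the maximum at this single input; for $\sfR>\bar{\sfR}_n(\sigma_1^2,\sigma_2^2)$ the identical computation would yield only a lower bound on $C_s$.
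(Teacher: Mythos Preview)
Your proof is correct, but it takes a different route from the paper's. The paper does not evaluate $I(\bfX_\sfR;\bfY_1)-I(\bfX_\sfR;\bfY_2)$ directly via the I-MMSE relation; instead it invokes the KKT equality condition \eqref{eq:EqualityCOndition}, which says that $C_s=\Xi(\bfx;P_{\bfX_\sfR})$ for any $\bfx$ in the support, and then reads off the value of $\Xi(\bfx;P_{\bfX_\sfR})$ for $\|\bfx\|=\sfR$ from the mismatched-estimation representation of the relative entropy (the formula~\eqref{eq:mistmatchedMMSE_formulat}), a computation already carried out as~\eqref{eq:diff_of_KL_s} in the proof of Theorem~\ref{thm:Char_Small_Amplitude}. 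So for the paper, Theorem~\ref{thm:Capacitiy_Small} is essentially a one-line corollary of machinery built for Theorem~\ref{thm:Char_Small_Amplitude}. Your approach is more self-contained: it bypasses both the KKT lemma and the mismatched-MMSE identity for relative entropy, relying only on the standard I-MMSE integral (already displayed in Section~\ref{sec:Connection_Other_Problem}) together with the conditional-mean formula~\eqref{eq:ConditionalExpectation} and the orthogonality principle. The integrands coincide because the mismatched-MMSE formula applied with a deterministic $\bfX_1=\bfx$ on the sphere collapses to exactly the MMSE you compute; your derivation simply reaches it without that detour.
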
 
\begin{IEEEproof}
See Section~\ref{sec:thm:Capacitiy_Small}. 
\end{IEEEproof}

\section{Beyond the Small Amplitude Regime} 
{ To evaluate the secrecy-capacity and find the optimal distribution $P_{\bfX^\star}$ beyond $\bar{\sfR}_n$ we rely on numerical estimations. We remark that, as pointed out in \cite{DytsoITWwiretap2018}, the capacity-achieving distribution is isotropic and consists of finitely many co-centric shells. Keeping this in mind, we can find the optimal input distribution $P_{\bfX^\star}$ by just optimizing over $P_{\|\bfX \|}$ with $\|\bfX \|\le \sfR$.
	
	Let us denote by $\widetilde{C}(\sigma_1^2,\sigma_2^2,\sfR)$ the numerical estimate of the secrecy-capacity and by $P_{\|\bfX^\star\|}$ the optimal pmf of the input norm. To numerically evaluate $\widetilde{C}(\sigma_1^2,\sigma_2^2,\sfR)$ and $P_{\|\bfX^\star\|}$ we rely on the algorithmic procedure described in~\cite{barletta2021numerical}. 
	    
	In Fig.~\ref{fig:Cs_vs_Cnum}, we show with black circles the numerical estimate $\widetilde{C}(\sigma_1^2,\sigma_2^2,\sfR)$ for $\sigma_1^2 = 1$, $\sigma_2^2 = 1.5, 10$, and $n=2,4$. For the same values of $\sigma_1^2$, $\sigma_2^2$, and $n$ we also show, with the red lines, the analytical small amplitude regime capacity $C_s(\sigma_1^2,\sigma_2^2,\sfR)$ from Theorem~\ref{thm:Capacitiy_Small}. Also, we show with blue dotted lines the secrecy-capacity under the average-power constraint $\bbE \left[ \| \bfX \|^2 \right] \leq \sfR^2$:
	\begin{align} \label{eq:C_G}
		C_G(\sigma_1^2,\sigma_2^2,\sfR) &= \frac{n}{2} \log \frac{1+\sfR^2/\sigma_1^2}{1+\sfR^2/\sigma_2^2}\ge C_s(\sigma_1^2,\sigma_2^2,\sfR),
	\end{align}
	where the inequality follows by noting that the average-power constraint $\bbE \left[ \| \bfX \|^2 \right] \leq \sfR^2$ is weaker than the amplitude constraint $\| \bfX \| \leq \sfR$. Finally, the dashed vertical lines show $\bar{\sfR}_n$ for the considered values of $\sigma_1^2$, $\sigma_2^2$, and $n$.
	
	In Fig.~\ref{fig:PMFevo_n2}, we show the evolution of the numerically estimated pmf $P_{\|\bfX^\star\|}$ for increasing values of $\sfR$, for $\sigma_1^2 = 1$, $\sigma_2^2 = 1.5$, and $n=2,8$. The figure shows, at each $\sfR$, the normalized amplitude mass points in the estimated pmf, while the size of the circles qualitatively shows the associated probability.}
\begin{figure}[t]
	\centering
	\input{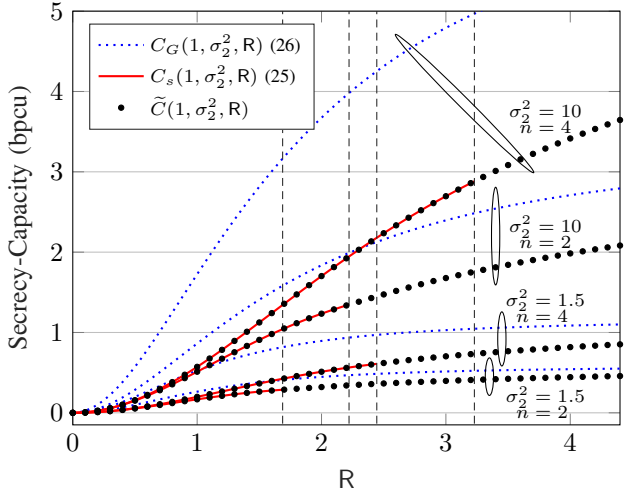}
	\caption{Secrecy-capacity in bit per channel use (bpcu) versus $\sfR$, for $\sigma_2^2 = 1.5,10$ and $n=2,4$.}\label{fig:Cs_vs_Cnum}
\end{figure}

\begin{figure}[t]
	\centering
	\input{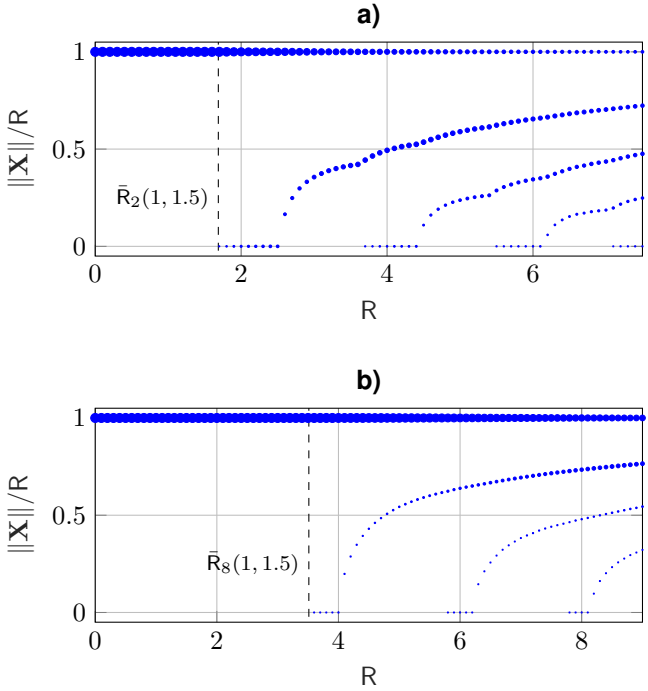}
	\caption{Evolution of the numerically estimated $P_{\|\bfX^\star\|}$ versus $\sfR$ for $\sigma_1^2 = 1$, $\sigma_2^2 = 1.5$, \textbf{\textsf{a)}} $n=2$, and \textbf{\textsf{b)}} $n=8$.}
	\label{fig:PMFevo_n2}
\end{figure}

\section{Proof of Theorem~\ref{thm:Char_Small_Amplitude}}

\subsection{KKT Conditions}
\begin{lemma}\label{lem:KKT} $P_{\bfX^\star}$ maximizes \eqref{eq:Secracy_CAP} if and only if 
\begin{align}
\Xi(\bfx;P_{\bfX^\star}) &= C_s(\sigma_1^2, \sigma_2^2,\sfR), \, \bfx \in \supp(P_{\bfX^\star}),  \label{eq:EqualityCOndition}\\
\Xi(\bfx;P_{\bfX^\star}) &\le C_s(\sigma_1^2, \sigma_2^2, \sfR), \, \bfx \in  \cB_0(\sfR) ,
\end{align}
where  for $\bfx \in \mathbb{R}^n$
\begin{align}
\Xi(\bfx;P_{\bfX^\star})&=\sfD(f_{\bfY_1|\bfX}(\cdot| \bfx) \|f_{\bfY_1^\star})- \sfD(f_{\bfY_2|\bfX}(\cdot|\bfx) \|f_{\bfY_2^\star})\\
&=\bbE \left[ g(\bfY_1) | \bfX=\bfx \right] , \label{eq:Writing_KKT_as_statistics}
\end{align}
and where 
\begin{align} \label{eq:functiong}
g(\bfy)=\bbE\left[\log\frac{f_{\bfY_2^\star}(\bfy+\bfN)}{f_{\bfY_1^\star}(\bfy)}\right]+ n \log\left(\frac{\sigma_2}{\sigma_1}\right),  \,  \bfy\in \mathbb{R}^n, 
\end{align} 
with $\bfN\sim {\cal N}(\mathbf{0}_n, (\sigma_2^2-\sigma_1^2) \bfI_n )$. 
\end{lemma}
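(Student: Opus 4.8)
The plan is to read the statement as the Karush--Kuhn--Tucker (stationarity) characterization of a \emph{concave} maximization over the convex, weak-$*$ compact set of input laws supported on $\cB_0(\sfR)$, and to derive both inequalities from a directional-derivative (variational inequality) argument in the spirit of Smith's classical capacity computation. The first ingredient I would establish is concavity of the objective $J(P_\bfX)=I(\bfX;\bfY_1)-I(\bfX;\bfY_2)$ in $P_\bfX$. Since the channel is degraded, $\bfX\to\bfY_1\to\bfY_2$ is a Markov chain and $J(P_\bfX)=I(\bfX;\bfY_1\mid\bfY_2)$. Introducing a time-sharing variable $U$ with $\bfX\mid U{=}u\sim P_u$, so that $U\to\bfX\to(\bfY_1,\bfY_2)$, the chain rule gives $I(\bfX;\bfY_1\mid\bfY_2)-I(\bfX;\bfY_1\mid\bfY_2,U)=I(U;\bfY_1\mid\bfY_2)\ge 0$ (using $I(U;\bfY_1\mid\bfY_2,\bfX)=0$ from the Markov property), which is exactly concavity of $J$. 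This is the step where degradedness is indispensable.

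Next I would settle existence and the derivative. The ball $\cB_0(\sfR)$ is compact, the conditional Gaussian densities are smooth and strictly positive, and the induced output densities $f_{\bfY_1^\star},f_{\bfY_2^\star}$ are bounded and bounded away from $0$ on compacts, so $J$ is weak-$*$ upper semicontinuous and attains its maximum $C_s$. For a feasible $P$ and $P_\lambda=(1-\lambda)P_{\bfX^\star}+\lambda P$, the standard Gateaux-derivative identity for mutual information gives, for each $i$, a derivative equal to $\int \sfD(f_{\bfY_i|\bfX}(\cdot|\bfx)\,\|\,f_{\bfY_i^\star})\,\rmd P(\bfx)-I(\bfX^\star;\bfY_i)$; subtracting over $i=1,2$ yields
\[
\frac{\rmd}{\rmd\lambda}\Big[I(\bfX;\bfY_1)-I(\bfX;\bfY_2)\Big]_{P_\lambda}\bigg|_{\lambda=0^+}
=\int \Xi(\bfx;P_{\bfX^\star})\,\rmd P(\bfx)-C_s .
\]
For necessity, optimality forces this to be $\le 0$ for every $P$; taking $P=\delta_\bfx$ gives $\Xi(\bfx;P_{\bfX^\star})\le C_s$ on $\cB_0(\sfR)$, while $P=P_{\bfX^\star}$ gives $\int\Xi\,\rmd P_{\bfX^\star}=C_s$, and the two together force $\Xi(\bfx;P_{\bfX^\star})=C_s$ for $P_{\bfX^\star}$-almost-every $\bfx$, hence on $\supp(P_{\bfX^\star})$ by continuity of $\Xi$. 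For sufficiency, concavity yields $J(P)-J(P_{\bfX^\star})\le\int\Xi\,\rmd P-C_s\le 0$ whenever $\Xi\le C_s$ everywhere, so the stated conditions imply global optimality.

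Finally I would verify the alternative representation $\Xi(\bfx;P_{\bfX^\star})=\bbE[g(\bfY_1)\mid\bfX=\bfx]$ by direct computation. Writing
\[
\sfD\!\left(f_{\bfY_i|\bfX}(\cdot|\bfx)\,\big\|\,f_{\bfY_i^\star}\right)
=-h(\bfY_i\mid\bfX{=}\bfx)-\int f_{\bfY_i|\bfX}(\bfy|\bfx)\log f_{\bfY_i^\star}(\bfy)\,\rmd\bfy,
\]
and substituting the closed-form Gaussian entropies $h(\bfY_i\mid\bfX{=}\bfx)=\tfrac n2\log(2\pi e\,\sigma_i^2)$, the entropy terms collapse to $n\log(\sigma_2/\sigma_1)$. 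Using degradedness in the form $\bfY_2\stackrel{d}{=}\bfY_1+\bfN$ given $\bfX$, with $\bfN\sim\mathcal N(\mathbf{0}_n,(\sigma_2^2-\sigma_1^2)\bfI_n)$, rewrites $\int f_{\bfY_2|\bfX}(\bfy|\bfx)\log f_{\bfY_2^\star}(\bfy)\,\rmd\bfy=\bbE[\log f_{\bfY_2^\star}(\bfY_1+\bfN)\mid\bfX=\bfx]$, and combining the two cross terms reproduces exactly $\bbE[g(\bfY_1)\mid\bfX=\bfx]$ with $g$ as in \eqref{eq:functiong}.

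The main obstacle is the analytic regularity underpinning the derivative identity: finiteness of all divergences, legitimacy of differentiating $J(P_\lambda)$ under the integral at $\lambda=0^+$, and the interchange of expectation with the logarithm. I would discharge these using the compact support of $\bfX$ together with the smoothness and strict positivity of the Gaussian output densities, which provide uniform bounds (hence dominating functions) on compacts. The conceptual crux remains the concavity step of the first paragraph, since without degradedness $I(\bfX;\bfY_1)-I(\bfX;\bfY_2)$ need not be concave and the sufficiency (``if'') direction would break down.
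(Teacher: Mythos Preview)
Your proposal is correct and complete. The paper itself does not give a self-contained proof of this lemma; it simply states ``This is a vector extension of \cite[Lemma~1]{barletta2021scalar}.'' Your argument---concavity of $I(\bfX;\bfY_1\mid\bfY_2)$ via degradedness and time-sharing, the Gateaux-derivative identity for $P_\lambda=(1-\lambda)P_{\bfX^\star}+\lambda P$, and then the standard ``$\le$ everywhere, $=$ on the support'' deduction---is precisely the Smith-type variational argument that underlies the cited scalar result, and your verification of the representation $\Xi(\bfx;P_{\bfX^\star})=\bbE[g(\bfY_1)\mid\bfX=\bfx]$ via $h(\bfY_i\mid\bfX{=}\bfx)=\tfrac{n}{2}\log(2\pi e\,\sigma_i^2)$ and $\bfY_2\stackrel{d}{=}\bfY_1+\bfN$ is exactly how that identity is obtained. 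In short, you have supplied the proof the paper only cites, and by the natural route.
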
 
\begin{proof}
	This is a vector extension of \cite[Lemma~1]{barletta2021scalar}.
\end{proof}

\subsection{A New Necessary and Sufficient Condition}

\begin{theorem}\label{thm:equivalent_condition}  $P_{\bfX_{\sfR}} $ is optimal if and only if for all $\| \bfx \|=\sfR$
\begin{equation}\label{eq:ineq_density}
\Xi({\bf 0};P_{\bfX_{\sfR}}) \le \Xi(\bfx;P_{\bfX_{\sfR}}). 
\end{equation} 
Moreover, if 
\begin{equation}
\sfR < \sigma_1^2 \sqrt{n \left(\frac{1}{\sigma_1^2}-\frac{1}{\sigma_2^2}\right)}
\end{equation}
then $P_{\bfX_{\sfR}} $ is optimal.
\end{theorem}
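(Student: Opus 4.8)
The plan is to reduce the infinitely many Karush--Kuhn--Tucker inequalities of Lemma~\ref{lem:KKT} to the single inequality \eqref{eq:ineq_density} by combining isotropy with the sign structure of $G_{\sigma_1,\sigma_2,\sfR,n}$. First I would use the rotational invariance of the channel and of $P_{\bfX_{\sfR}}$ to note that $f_{\bfY_1^\star}$ and $f_{\bfY_2^\star}$ are isotropic, so the functional $\Xi(\bfx;P_{\bfX_{\sfR}})$ in \eqref{eq:Writing_KKT_as_statistics} depends on $\bfx$ only through $r=\|\bfx\|$; write $\Psi(r)=\Xi(\bfx;P_{\bfX_{\sfR}})$. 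Since $\supp(P_{\bfX_{\sfR}})=\cC(\sfR)$ and $\Psi$ is constant on each sphere, the equality branch \eqref{eq:EqualityCOndition} is automatic and merely pins $\Psi(\sfR)=C_s(\sigma_1^2,\sigma_2^2,\sfR)$; the binding content is the inequality branch, namely $\Psi(r)\le\Psi(\sfR)$ for every $r\in[0,\sfR]$. Thus Lemma~\ref{lem:KKT} reduces optimality of $P_{\bfX_{\sfR}}$ to $\Psi(r)\le\Psi(\sfR)$ on $[0,\sfR]$.

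Second I would compute the radial derivative $\Psi'(r)$ and match it to $G$. Using the conditional-mean (score) representation for the uniform-sphere prior, $\bbE[\bfX\mid\bfY_i=\bfy]=\sfR\,\frac{\bfy}{\|\bfy\|}\,\sfh_{\frac{n}{2}}\!\big(\tfrac{\sfR\|\bfy\|}{\sigma_i^2}\big)$, together with Tweedie's identity $\nabla\log f_{\bfY_i^\star}(\bfy)=\sigma_i^{-2}\big(\bbE[\bfX\mid\bfY_i=\bfy]-\bfy\big)$, I would differentiate $g$ in \eqref{eq:functiong} and take the conditional expectation in \eqref{eq:Writing_KKT_as_statistics}. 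After folding the $\bfN_1$-average together with the extra noise $\bfN$ into a single Gaussian $\bfW$ in dimension $n+2$, this should yield $\Psi'(r)=a(r)\,G_{\sigma_1,\sigma_2,\sfR,n}(r)$ with $a(r)>0$ for $r>0$. Carrying out this computation correctly --- in particular producing the $(n+2)$-dimensional $\bfW$ and verifying $a(r)>0$ through Bessel-ratio identities --- is the technical heart of the argument and the step I expect to be the main obstacle.

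Third, I would read off the shape of $\Psi$ from the sign pattern of $G$. Since $\sfh_{\frac{n}{2}}(x)\to1$, the boundary value is $G(\infty)=\tfrac{1}{\sigma_1^2}-\tfrac{1}{\sigma_2^2}>0$. Combined with the standing assumption that $G$ has at most one sign change on $(0,\infty)$, there are only two possibilities: either $G>0$ throughout $(0,\infty)$, so $\Psi$ is nondecreasing, or $G$ is first negative then positive, so $\Psi$ is ``valley shaped'' (decreasing then increasing). In both cases the maximum of $\Psi$ over $[0,\sfR]$ is attained at an endpoint, i.e.\ $\max_{r\in[0,\sfR]}\Psi(r)=\max\{\Psi(0),\Psi(\sfR)\}$. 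Therefore $\Psi(r)\le\Psi(\sfR)$ for all $r\in[0,\sfR]$ holds if and only if $\Psi(0)\le\Psi(\sfR)$, which is exactly \eqref{eq:ineq_density} because $\Psi(0)=\Xi(\mathbf{0};P_{\bfX_{\sfR}})$ and $\Psi(\sfR)=\Xi(\bfx;P_{\bfX_{\sfR}})$ for $\|\bfx\|=\sfR$. This establishes the equivalence.

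Finally, for the explicit sufficient condition I would invoke the bound chain \eqref{eq:LB_on_h}--\eqref{eq:UB_on_h}: when $\sfR<\sigma_1^2\sqrt{n\big(\tfrac{1}{\sigma_1^2}-\tfrac{1}{\sigma_2^2}\big)}$, those inequalities force $G(y)\ge0$ for all $y\ge0$, so $\Psi$ is nondecreasing on $[0,\sfR]$ and in particular $\Psi(0)\le\Psi(\sfR)$. By the equivalence just proved, $P_{\bfX_{\sfR}}$ is then optimal, completing the argument.
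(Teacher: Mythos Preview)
Your overall plan tracks the paper's strategy closely, and the reduction in your first paragraph is exactly right. The genuine gap is in your second and third paragraphs: the claimed identity $\Psi'(r)=a(r)\,G_{\sigma_1,\sigma_2,\sfR,n}(r)$ is not what emerges from the computation, and this breaks your direct reading of the sign of $\Psi'$ from the sign of $G$ at the single point $r$.

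What actually happens (see Lemma~\ref{Lemma:derivative_Xi}) is that the $\bfN_1$-average does \emph{not} fold away into the $\bfW$ already present in $G$. Instead one obtains
\[
\Psi'(r)=r\,\bbE\!\left[G_{\sigma_1,\sigma_2,\sfR,n}\bigl(\sigma_1 Q_{n+2}\bigr)\right],
\]
where $Q_{n+2}^2$ is noncentral chi-square with $n+2$ degrees of freedom and noncentrality $r^2/\sigma_1^2$. Thus $\Psi'$ is an integral transform of $G$ against the noncentral chi-square kernel, with the dependence on $r$ entering only through the noncentrality parameter. To conclude that $\Psi'$ has at most one sign change on $(0,\infty)$ from the assumption that $G$ does, you cannot argue pointwise; you need the variation-diminishing property of positive-definite kernels (Karlin's oscillation theorem, used in the paper as Lemma~\ref{lem:numberOfOscillations}). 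Without that lemma your step~3 does not follow, because a smoothed version of a function with one sign change could in principle behave differently. Your fourth paragraph survives as written, since if $G\ge 0$ everywhere then any average of $G$ is also nonnegative; but the ``if and only if'' part of the theorem requires the missing Karlin argument.
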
 

\begin{proof}
	The secrecy-density $\Xi(\cdot; P_{\bfX_{\sfR}})$ is a function only of $\| \bfx\|$, thanks to the rotational symmetry of the Gaussian distribution and of $P_{\bfX_{\sfR}}$. In view of this, a way to prove condition~\eqref{eq:ineq_density} is to show that the maximum of $\|\bfx\|\mapsto\Xi(\|\bfx\|;P_{\bfX_{\sfR}})$ occurs at either $\|\bfx\|=0$ or $\|\bfx\|=\sfR$. Next, we show that the derivative of $\Xi(\|\bfx\|;P_{\bfX_{\sfR}})$ makes at most one sign change, from negative to positive. This fact will prove the claim.
	
	From Lemma \ref{Lemma:derivative_Xi} in the Appendix, the derivative of $\Xi$ is\footnote{A related calculation was erroneously performed in \cite{dytsoMI_est_2019}. However, this error does not change the results of \cite{dytsoMI_est_2019} as only the sign of the derivative is important and not the value itself.} 
	\begin{align} \label{eq:derivative_Xi}
	\Xi'(\|\bfx\|;P_{\bfX_{\sfR}}) =\|\bfx \|\:\bbE\left[\widetilde{M}_2(\sigma_1 Q_{n+2})-M_1(\sigma_1 Q_{n+2}) \right]
	\end{align}
	where $Q_{n+2}^2$ is a noncentral chi-square random variable with $n+2$ degrees of freedom and noncentrality parameter $\frac{\|\bfx\|^2}{\sigma_1^2}$ and
	\begin{align}
	M_i(y) &= \frac{1}{\sigma_i^2}\left(\frac{\sfR}{y}\sfh_{  \frac{n}{2} }\left(\frac{\sfR}{\sigma_i^2}y\right)-1\right), \qquad i\in \{1,2\} \\
	\widetilde{M}_2(y) &= \bbE\left[M_2(\|y+\bfW \|)\right],
	\end{align}
	where  $\bfW \sim {\cal N}(\mathbf{0}_{n+2},(\sigma_2^2-\sigma_1^2)\bfI_{n+2})$.

Note that $\Xi'(0;P_{\bfX_{\sfR}})=0$, and that $\Xi'(\|\bfx\|;P_{\bfX_{\sfR}})>0$ for sufficiently large $\|\bfx\|$; in fact, we have
\begin{align}
&\Xi'(\|\bfx\|;P_{\bfX_{\sfR}}) > \|\bfx\|\left(\frac{1}{\sigma_1^2}-\frac{1}{\sigma_2^2}\right) -\frac{\|\bfx\|}{\sigma_1^2}\bbE\left[\frac{\sfR}{\sigma_1 Q_{n+2}}  \right] \label{eq:applyboundsonh} \\
&=\|\bfx\|\left(\frac{1}{\sigma_1^2}-\frac{1}{\sigma_2^2}\right) -\frac{\|\bfx\|}{\sigma_1^2}\bbE\left[\frac{\sfR}{\|\bfx\|} \sfh_{\frac{n}{2}}\left(\frac{\|\bfx\|}{\sigma_1}Q_n\right) \right] \label{eq:ndegreesoffreedom} \\
&\ge\|\bfx\|\left(\frac{1}{\sigma_1^2}-\frac{1}{\sigma_2^2}\right) -\frac{\sfR}{\sigma_1^2}, \label{eq:applyboundsonh1}
\end{align}  
where \eqref{eq:applyboundsonh} follows by $0\le \sfh_{\frac{n}{2}}(x) \le 1$ for $x\ge 0$; \eqref{eq:ndegreesoffreedom} follows by a change of measure in the expectation; and finally \eqref{eq:applyboundsonh1} holds by $\sfh_{\frac{n}{2}}(x) \le 1$.

 To conclude, we need to prove that $\Xi'(\|{\bfx} \|;P_{\bfX_{\sfR}})$ changes sign at most once.  To that end, we will need the following lemma  shown in \cite[Theorem 3]{karlin1957polya}. 
\begin{lemma} \label{lem:numberOfOscillations} Let the pdf $f(x,\omega)$ be a positive-definite kernel that can be differentiated $n$ times with respect to $x$ for all $\omega$, and let $\eta(\omega)$ be a function that changes  sign $n$ times. If
\begin{align}
M(x)= \int    \eta(\omega)     f(x,\omega) {\rm d} \omega,
\end{align} 
can be differentiated $n$ times, then $M(x)$ changes sign at most $n$ times. 
\end{lemma}

By using~\eqref{eq:derivative_Xi}, the fact that the pdf of a chi-square is positive defined kernel~\cite{karlin1957polya}, and Lemma~\ref{lem:numberOfOscillations}, the number of sign changes of $\Xi'(\|{\bfx} \|;P_{\bfX_{\sfR}})$ is upper-bounded by the number of sign changes of 
\begin{equation}
\widetilde{M}_2(y)-M_1(y)=G_{\sigma_1,\sigma_2,\sfR,n}(y), 
\end{equation}
for $y>0$ where $G_{\sigma_1,\sigma_2,\sfR,n}(y)$ was defined and discussed in Section~\ref{sec:Assumptions} and it was assumed that it has at most one sign change for $y >0$.  For example, a sufficient condition is given by 
\begin{equation}
\sfR < \sigma_1^2 \sqrt{n \left(\frac{1}{\sigma_1^2}-\frac{1}{\sigma_2^2}\right)}
\end{equation}
  This concludes the proof.

\end{proof}

\subsection{Estimation Theoretic Representation}

To complete the proof we seek to re-write the condition in Theorem~\ref{thm:equivalent_condition} in the estimation theoretic form.  To that end, we need the following representation of the relative entropy  \cite{verdu2010mismatched}: 
\begin{equation}
D(P_{\bfX_1+\sqrt{t}\bfZ} \| P_{\bfX_2+\sqrt{t}\bfZ})
= \frac{1}{2} \int_t^\infty  \frac{g(s)}{s^2} \rmd s , \label{eq:mistmatchedMMSE_formulat}
\end{equation} 
where 
\begin{align}
g(s)&= \bbE \left[ \| \bfX_1 -\phi_2(\bfX_1+\sqrt{s} \bfZ)    \|^2 \right] \notag\\
&\quad - \bbE \left[ \| \bfX_1 -\phi_1(\bfX_1+\sqrt{s} \bfZ)    \|^2 \right]
\end{align} 
and where
\begin{align}
\phi_i(\bfy)=\bbE[\bfX_i|\bfX_i+\sqrt{s} \bfZ=\bfy], \,  i\in \{1,2\}. 
\end{align} 

Another fact that will be important for our expression is 
\begin{align}
\bbE \left[\bfX_{\sfR} \mid \bfX_{\sfR}+\sqrt{s} \bfZ=\bfy \right]=   \frac{\sfR \bfy}{\|\bfy\|}    \mathsf{h}_{\frac{n}{2}}\left(  \frac{\| \bfy\| \sfR}{s} \right),  \label{eq:ConditionalExpectation}
\end{align} 
see, for example \cite{dytsoMI_est_2019}, for the proof. 

Next,  using \eqref{eq:mistmatchedMMSE_formulat} and \eqref{eq:ConditionalExpectation} note that for any $\| \bfx\|=\sfR$ we have that for $i \in \{1,2\}$
\begin{align}
& D(P_{ {\bfx}+\sqrt{\sigma^2_i}\bfZ} \| P_{\bfX_{\sfR}+\sqrt{\sigma^2_i}\bfZ})\\
 &=\frac{1}{2} \int_{\sigma_i^2}^\infty  \frac{ \bbE \left[  \| \bfx -  \frac{\sfR ( {\bfx}+\sqrt{s}\bfZ)}{\| {\bfx}+\sqrt{s}\bfZ\|}    \mathsf{h}_{\frac{n}{2}}\left(  \frac{\|  {\bfx}+\sqrt{s}\bfZ\| \sfR}{s} \right) \|^2\right] }{s^2} \rmd s\\
  &=\frac{1}{2} \int_{\sigma_i^2}^\infty  \frac{\sfR^2 -\sfR^2\bbE \left[      \mathsf{h}_{\frac{n}{2}}^2\left(  \frac{\|  {\bfx}+\sqrt{s}\bfZ\| \sfR}{s} \right)\right] }{s^2} \rmd s, \label{eq:KL_at_x}
\end{align}
and
\begin{align}
D(P_{ {\bf 0}+\sqrt{\sigma^2_i}\bfZ} \| P_{\bfX_{\sfR}+\sqrt{\sigma^2_i}\bfZ})=\frac{1}{2} \int_{\sigma_i^2}^\infty  \frac{ \sfR^2 \bbE \left[ \sfh_{  \frac{n}{2} }^2 \left( \frac{ \sfR \|\bfZ\|}{s} \right)\right] }{s^2} \rmd s. \label{eq:KL_at_0}
\end{align}

Now, note that by using definition of  $\Xi(\bfx; P_{\bfX_{\sfR}})$ in  \eqref{eq:Writing_KKT_as_statistics}, and \eqref{eq:KL_at_x} and \eqref{eq:KL_at_0} we have that  for $\| \bfx\|=\sfR$
\begin{align}
&\Xi(\bfx; P_{\bfX_{\sfR}})\notag\\
&=D(P_{ {\bfx}+\sqrt{\sigma^2_1}\bfZ} \| P_{\bfX_{\sfR}+\sqrt{\sigma^2_1}\bfZ})-  D(P_{ {\bf x}+\sqrt{\sigma^2_2}\bfZ} \| P_{\bfX_{\sfR}+\sqrt{\sigma^2_2}\bfZ})\\
&=\frac{1}{2} \int_{\sigma_1^2}^{\sigma_2^2} \frac{\sfR^2 -\sfR^2\bbE \left[      \mathsf{h}_{\frac{n}{2}}^2\left(  \frac{\|  {\bfx}+\sqrt{s}\bfZ\| \sfR}{s} \right) \right] }{s^2} \rmd s, \label{eq:diff_of_KL_s}
\end{align} 
and 
\begin{align}
&\Xi({\bf 0} ;P_{\bfX_{\sfR}})\notag\\
&=D(P_{ {\bf 0}+\sqrt{\sigma^2_1}\bfZ} \| P_{\bfX_{\sfR}+\sqrt{\sigma^2_1}\bfZ})-  D(P_{ {\bf 0}+\sqrt{\sigma^2_2}\bfZ} \| P_{\bfX_{\sfR}+\sqrt{\sigma^2_2}\bfZ})\\
&=\frac{1}{2} \int_{\sigma_1^2}^{\sigma_2^2} \frac{\sfR^2\bbE \left[      \mathsf{h}_{\frac{n}{2}}^2\left(  \frac{\|  \sqrt{s}\bfZ\| \sfR}{s} \right) \right] }{s^2} \rmd s
\end{align}

Consequently, the necessary and sufficient condition in Theorem~\ref{thm:equivalent_condition} can be equivalently written as
\begin{align}
& \int_{\sigma_1^2}^{\sigma_2^2} \frac{\bbE \left[      \mathsf{h}_{\frac{n}{2}}^2\left(  \frac{\|  \sqrt{s}\bfZ\| \sfR}{s} \right) +     \mathsf{h}_{\frac{n}{2}}^2\left(  \frac{\|  {\bfx}+\sqrt{s}\bfZ\| \sfR}{s} \right) \right]-1}{s^2} \rmd s \le 0.\label{eq:Final_inequality_thm1} 
\end{align} 

Now $\bar{\sfR}_n(\sigma_1^2,\sigma_2^2)$ will be the largest $\sfR$ that satisfies \eqref{eq:Final_inequality_thm1}, which concludes the proof of Theorem~\ref{thm:Char_Small_Amplitude}.

\section{Proof of Theorem~\ref{thm:large_n_beh}} 
\label{sec:large_n_beh}
The objective of the proof is to understand how the condition in \eqref{eq:Condition_for_optimality} behaves as $n \to \infty$. To study the large $n$ behavior we  will need to the following bounds on the $ \mathsf{h}_{\nu}$ \cite{segura2011bounds,baricz2015bounds}:  for $\nu > \frac{1}{2}$ 
\begin{align}
 \mathsf{h}_{\nu}(x)=   \frac{x}{ \frac{2\nu-1}{2}+\sqrt{ \frac{(2\nu-1)^2}{4} +x^2}} \cdot g_\nu(x),
\end{align}
where 
\begin{align}
1 \ge g_\nu(x) \ge    \frac{ \frac{2\nu-1}{2}+\sqrt{ \frac{(2\nu-1)^2}{4} +x^2}}{ \nu+\sqrt{ \nu^2 +x^2} }.
\end{align} 

Now let $\sfR= c\sqrt{n}$ for some $c>0$.  The goal is to understand the behavior of $\bbE \left[      \mathsf{h}_{\frac{n}{2}}^2\left(  \frac{\|  \sqrt{s}\bfZ\| \sfR}{s} \right) +     \mathsf{h}_{\frac{n}{2}}^2\left(  \frac{\|  {\bfx}+\sqrt{s}\bfZ\| \sfR}{s} \right) \right]$ as $n$ goes to infinity. 
First,   let 
\begin{align}
V_n= \frac{\| \bfZ\|}{\sqrt{n}},
\end{align}
and note that 
\begin{align}
&\lim_{n \to \infty} \bbE \left[      \mathsf{h}_{\frac{n}{2}}^2\left(  \frac{\|  \sqrt{s}\bfZ\| c \sqrt{n}}{s} \right)  \right]\notag\\
&= \lim_{n \to \infty}  \bbE \left[   \left(  \frac{ \frac{ c V_n }{\sqrt{s}}}{ \frac{n-1}{2n}+\sqrt{ \frac{(n-1)^2}{4n^2} + \left(\frac{ c V_n }{\sqrt{s}} \right)^2}} \cdot g_{ \frac{n}{2}}  \left( \frac{ c V_n }{\sqrt{s}} n \right)\right)^2 \right]\\
&= \bbE \left[    \lim_{n \to \infty}   \left(  \frac{ \frac{ c V_n }{\sqrt{s}}}{ \frac{n-1}{2n}+\sqrt{ \frac{(n-1)^2}{4n^2} + \left(\frac{ c V_n }{\sqrt{s}} \right)^2}} \cdot g_{ \frac{n}{2}}  \left( \frac{ c V_n }{\sqrt{s}} n \right)\right)^2 \right] \label{eq:Applying_DCT} \\
 &= \frac{c^2 }{ \left(  \frac{\sqrt{s}}{2}+\sqrt{ \frac{s}{4} + c^2} \right)^2},\label{eq:using_Law_large_numbers}
\end{align} 
where \eqref{eq:Applying_DCT} follows by the dominated convergence theorem, and \eqref{eq:using_Law_large_numbers} follows since by the law of large numbers we have, almost surely, that 
\begin{align}
\lim_{n \to \infty} V_n^2= \lim_{n \to \infty} \frac{1}{n} \sum_{i=1}^n Z_i^2 =\bbE[Z^2]=1.
\end{align}

Second, let 
\begin{align}
W_n= \frac{\|  {\bfx}+\sqrt{s}\bfZ\|}{\sqrt{n}},
\end{align}
where without loss of generality we take $\bfx=[ \sfR, 0, \ldots, 0]$

\begin{align}
& \lim_{n \to \infty} \bbE \left[         \mathsf{h}_{\frac{n}{2}}^2\left(  \frac{ \|  {\bfx}+\sqrt{s}\bfZ\| c \sqrt{n}}{s} \right) \right] \notag\\
&= \lim_{n \to \infty}  \bbE \left[   \left(  \frac{ \frac{ c W_n }{s}}{ \frac{n-1}{2n}+\sqrt{ \frac{(n-1)^2}{4n^2} + \left(\frac{ c W_n }{s} \right)^2}} \cdot g_{ \frac{n}{2}}  \left( \frac{ cW_n }{s} n \right)\right)^2 \right]\\
&=   \bbE \left[ \lim_{n \to \infty}  \left(  \frac{ \frac{ c W_n }{s}}{ \frac{n-1}{2n}+\sqrt{ \frac{(n-1)^2}{4n^2} + \left(\frac{ c W_n }{s} \right)^2}} \cdot g_{ \frac{n}{2}}  \left( \frac{ cW_n }{s} n \right)\right)^2 \right]  \label{eq:Applying_DCT_v2} \\
&=    \frac{ c^2 (c^2+ s)}{ \left( \frac{s}{2}+\sqrt{ \frac{s^2}{4} +c^2( c^2+ s)  } \right)^2} , \label{eq:using_Law_large_numbers_second}
\end{align} 
where \eqref{eq:Applying_DCT_v2} follows by the dominated convergence theorem and 
where \eqref{eq:using_Law_large_numbers_second} follows since by the strong law of large numbers we have that almost surely
\begin{align}
\lim_{n \to \infty} W_n^2&= \lim_{n \to \infty} \frac{1}{n}   ( \sqrt{s} Z_1+c\sqrt{n})^2 + s \lim_{n \to \infty}  \frac{1}{n} \sum_{i=2}^nZ_i^2\\
 &= c^2+ s.
\end{align}

Combining \eqref{eq:using_Law_large_numbers} and \eqref{eq:using_Law_large_numbers_second} with \eqref{eq:Condition_for_optimality} we arrive at  
\begin{align}
& \int_{\sigma_1^2}^{\sigma_2^2} \frac{{ \frac{c^2 }{ \left(  \frac{\sqrt{s}}{2}+\sqrt{ \frac{s}{4} + c^2} \right)^2}} +      \frac{ c^2 (c^2+ s)}{ \left( \frac{s}{2}+\sqrt{ \frac{s^2}{4} +c^2( c^2+ s)  } \right)^2} -1}{s^2} \rmd s =0.
\end{align}

\section{Proof of Theorem~\ref{thm:Capacitiy_Small}} 
\label{sec:thm:Capacitiy_Small}
Using the KKT conditions in \eqref{eq:EqualityCOndition}, we have that for $\bfx=[\sfR, 0, \ldots, 0]$ 
\begin{align}
C_s(\sigma_1^2, \sigma_2^2, \sfR)&=\Xi(\bfx;P_{\bfX_{\sfR}})\\
&=\sfD(f_{\bfY_1|\bfX}(\cdot| \bfx) \|f_{\bfY_1^\star})- \sfD(f_{\bfY_2|\bfX}(\cdot|\bfx) \|f_{\bfY_2^\star})\\
&=  \frac{1}{2} \int_{\sigma_1^2}^{\sigma_2^2} \frac{\sfR^2 -\sfR^2\bbE \left[      \mathsf{h}_{\frac{n}{2}}^2\left(  \frac{\|  \sfR+\sqrt{s}\bfZ\| \sfR}{s} \right) \right] }{s^2} \rmd s
\end{align}
where the last expression was computed in \eqref{eq:diff_of_KL_s}.  This concludes the proof. 

\section{Conclusion} 
This paper focuses on the secrecy-capacity vector Gaussian wiretap channel under the peak-power (or amplitude constraint) in a so-called small (but not vanishing) amplitude regime. In this regime, the optimal input distribution $P_{\bfX_\sfR}$ is supported on a single sphere of radius $\mathsf{R}$.  The paper has identified the largest $\bar{\mathsf{R}}_n$ such that this distribution $P_{\bfX_\sfR}$ is optimal. In addition, the asymptotic of $\bar{\mathsf{R}}_n$ has been completely characterized as dimension $n$ approaches infinity. As a by-product of the analysis, the capacity in the small-amplitude regime has also been characterized in more or less closed-form.  The paper has also provided a number of supporting numerical examples. As part of ongoing work, we are trying to resolve the conjecture that was made regarding the number of zeros of the function defined through the ratios of Bessel functions. An interesting and ambitious future direction would be to determine a regime in which a mixture of a mass point at zero and $P_{\bfX_\sfR}$ is optimal.  

\appendix
\section{Derivative of the Secrecy-Density}
\begin{lemma} \label{Lemma:derivative_Xi}
	The derivative of the secrecy-density for the input $P_{\bfX_{\sfR}}$ is
	\begin{align} \label{eq:derivative_Xi_lemma}
	\Xi'(\|\bfx\|;P_{\bfX_{\sfR}}) =\|\bfx \|\:\bbE\left[\widetilde{M}_2(\sigma_1 Q_{n+2})-M_1(\sigma_1 Q_{n+2}) \right]
	\end{align}
	where $Q_{n+2}^2$ is a noncentral chi-square random variable with $n+2$ degrees of freedom and noncentrality parameter $\frac{\|\bfx\|^2}{\sigma_1^2}$ and
	\begin{align}
	M_i(y) &= \frac{1}{\sigma_i^2}\left(\frac{\sfR}{y}\sfh_{  \frac{n}{2} }\left(\frac{\sfR}{\sigma_i^2}y\right)-1\right), \qquad i\in \{1,2\} \\
	\widetilde{M}_2(y) &= \bbE\left[M_2(\|y+\bfW \|)\right],
	\end{align}
	where  $\bfW \sim {\cal N}(\mathbf{0}_{n+2},(\sigma_2^2-\sigma_1^2)\bfI_{n+2})$.
\end{lemma}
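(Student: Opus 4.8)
The plan is to differentiate the secrecy-density directly, starting from its representation as a difference of two relative entropies (one per noise level), using a score-function identity, and then reducing the resulting $n$-dimensional Gaussian expectations to $(n+2)$-dimensional ones. First I would write $\Xi(\bfx;P_{\bfX_{\sfR}})=L_{\sigma_1}(\bfx)-L_{\sigma_2}(\bfx)$, where $L_{\sigma_i}(\bfx)=\sfD(f_{\bfY_i|\bfX}(\cdot|\bfx)\|f_{\bfY_i^\star})=D(P_{\bfx+\sigma_i\bfZ}\|P_{\bfX_{\sfR}+\sigma_i\bfZ})$ as in \eqref{eq:Writing_KKT_as_statistics}. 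Because the first density is a shifted Gaussian whose differential entropy is independent of $\bfx$, only the cross term $-\bbE_\bfZ[\log f_{\bfX_{\sfR}+\sigma_i\bfZ}(\bfx+\sigma_i\bfZ)]$ depends on $\bfx$; differentiating and invoking the Gaussian-channel (Tweedie) score identity $\nabla\log f_{\bfX_{\sfR}+\sigma\bfZ}(\bfy)=\sigma^{-2}(\bbE[\bfX_{\sfR}\mid\bfX_{\sfR}+\sigma\bfZ=\bfy]-\bfy)$ yields $\nabla_\bfx L_{\sigma_i}(\bfx)=\sigma_i^{-2}(\bfx-\bbE_\bfZ[\phi_i(\bfx+\sigma_i\bfZ)])$, where $\phi_i$ is the conditional mean of $\bfX_{\sfR}$ given in closed form by \eqref{eq:ConditionalExpectation}.

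Next I would take the radial derivative. By the rotational symmetry of $P_{\bfX_{\sfR}}$ and of the Gaussian noise, $L_{\sigma_i}$ depends on $\bfx$ only through $\|\bfx\|$, so $L_{\sigma_i}'(\|\bfx\|)=\tfrac{\bfx}{\|\bfx\|}\cdot\nabla_\bfx L_{\sigma_i}$. Substituting \eqref{eq:ConditionalExpectation}, the radial projection of $\phi_i(\bfx+\sigma_i\bfZ)$ replaces the vector $\bfx+\sigma_i\bfZ$ by its first coordinate, giving $L_{\sigma_i}'(\|\bfx\|)=\tfrac{\|\bfx\|}{\sigma_i^2}-\tfrac{\sfR}{\sigma_i^2}\bbE_\bfZ\!\big[\tfrac{(\bfx+\sigma_i\bfZ)_1}{\|\bfx+\sigma_i\bfZ\|}\sfh_{n/2}(\tfrac{\|\bfx+\sigma_i\bfZ\|\sfR}{\sigma_i^2})\big]$. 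The crux is then a dimension-lifting identity: for $\bfx=\|\bfx\|\bfe_1$ and any radial $\rho$, $\bbE_{\bfZ\in\mathbb{R}^n}[\tfrac{(\bfx+\sigma\bfZ)_1}{\|\bfx+\sigma\bfZ\|}\rho(\|\bfx+\sigma\bfZ\|)]=\|\bfx\|\,\bbE_{\bfZ'\in\mathbb{R}^{n+2}}[\rho(\|\bfx'+\sigma\bfZ'\|)/\|\bfx'+\sigma\bfZ'\|]$ with $\|\bfx'\|=\|\bfx\|$, i.e. with $\|\bfx'+\sigma\bfZ'\|\stackrel{d}{=}\sigma Q_{n+2}^{(i)}$. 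I would establish it either via Stein's identity on the first coordinate combined with the Bessel recursion $\tfrac{d}{da}[a^{-\nu}\sfI_\nu(a)]=a^{-\nu}\sfI_{\nu+1}(a)$, or by passing to spherical coordinates and using $\int_{S^{n-1}}\omega_1 e^{a\omega_1}\,d\omega\propto a^{-(n/2-1)}\sfI_{n/2}(a)$, which is precisely the angular integral that arises in dimension $n+2$. Applied with $\rho(u)=\sfh_{n/2}(\sfR u/\sigma_i^2)$, this turns each term into $L_{\sigma_i}'(\|\bfx\|)=-\|\bfx\|\,\bbE[M_i(\sigma_i Q_{n+2}^{(i)})]$, where $(Q_{n+2}^{(i)})^2$ is noncentral chi-square with $n+2$ degrees of freedom and noncentrality $\|\bfx\|^2/\sigma_i^2$.

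Finally I would unify the two noise levels onto the single variable $Q_{n+2}:=Q_{n+2}^{(1)}$ appearing in the statement. Writing $\sigma_2^2=\sigma_1^2+(\sigma_2^2-\sigma_1^2)$ and splitting the $(n+2)$-dimensional observation accordingly, $\sigma_2 Q_{n+2}^{(2)}\stackrel{d}{=}\|\bfV+\bfW\|$ with $\bfV=\bfx'+\sigma_1\bfZ'$ and $\bfW\sim\mathcal{N}(\mathbf{0}_{n+2},(\sigma_2^2-\sigma_1^2)\bfI_{n+2})$ independent, so that $\bbE[M_2(\sigma_2 Q_{n+2}^{(2)})]=\bbE[\widetilde{M}_2(\sigma_1 Q_{n+2})]$ by the definition of $\widetilde{M}_2$. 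Subtracting then gives $\Xi'(\|\bfx\|;P_{\bfX_{\sfR}})=L_{\sigma_1}'-L_{\sigma_2}'=\|\bfx\|\,\bbE[\widetilde{M}_2(\sigma_1 Q_{n+2})-M_1(\sigma_1 Q_{n+2})]$, which is \eqref{eq:derivative_Xi_lemma}. I expect the dimension-lifting identity of the second step to be the main obstacle — both the Bessel/spherical computation and the justification (via dominated convergence) of interchanging differentiation with the expectation — whereas the score identity and the Gaussian-convolution unification are routine once it is in hand.
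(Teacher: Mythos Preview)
Your argument is correct and reaches \eqref{eq:derivative_Xi_lemma}, but the route is genuinely different from the paper's. The paper first passes to spherical coordinates and expresses each information density $i_j(\|\bfx\|;P_\bfX)$ through noncentral $\chi^2_n$ densities, then forms the finite difference $i_j(\rho_1)-i_j(\rho_2)$, integrates by parts, and normalises the difference of $\chi^2$ cdf's into an auxiliary density $f_{Q_j}$; the appearance of $n{+}2$ degrees of freedom is obtained by a separate limiting computation (their Lemma~\ref{Lemma:fQj}) using the Poisson-mixture representation of the noncentral $\chi^2$ law. You instead differentiate the cross-entropy directly via Tweedie's score identity and the known conditional mean \eqref{eq:ConditionalExpectation}, and then produce the $n{+}2$ dimensions through the angular/Stein identity
\[
\bbE_{\bfZ\in\bbR^n}\!\Big[\tfrac{(\bfx+\sigma\bfZ)_1}{\|\bfx+\sigma\bfZ\|}\,\rho(\|\bfx+\sigma\bfZ\|)\Big]
=\|\bfx\|\,\bbE_{\bfZ'\in\bbR^{n+2}}\!\Big[\tfrac{\rho(\|\bfx'+\sigma\bfZ'\|)}{\|\bfx'+\sigma\bfZ'\|}\Big],
\]
which is exactly the Bessel recursion $\frac{d}{da}[a^{-\nu}\sfI_\nu(a)]=a^{-\nu}\sfI_{\nu+1}(a)$ read off from the spherical integral. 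Your approach is shorter and more estimation-theoretic, and it makes the $(n{+}2)$-shift geometric rather than analytic; the paper's approach, on the other hand, yields $i_j'(\rho;P_\bfX)$ for an \emph{arbitrary} radially symmetric $P_\bfX$ before specialising to $P_{\bfX_\sfR}$, and packages the dimension shift into a reusable probabilistic lemma. The Gaussian-convolution step you use to pass from $\sigma_2 Q_{n+2}^{(2)}$ to $\widetilde{M}_2(\sigma_1 Q_{n+2})$ is also implicit in the paper's formulation but you make it explicit, which is cleaner. The only point to be careful about is the justification of differentiating under the expectation; the integrand $\sfh_{n/2}$ and its derivative are bounded, so dominated convergence is indeed routine.
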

\begin{proof}
	We start with the secrecy-density expressed in spherical coordinates. A quick way to get the information densities in this coordinate system is to note that:
	\begin{align}
	&I(\bfX;\bfY_i) \nonumber\\
	 &= h(\bfY_i)-h(\bfN_i) \\
	&=h(\|\bfY_i \|)+(n-1)\bbE[\log\|\bfY_i \|]+h_\lambda\left(\frac{\bfY_i}{\|\bfY_i \|}  \right)- h(\bfN_i) \label{eq:spherical} \\
	&=h(\|\bfY_i \|^2)+\left(\frac{n}{2}-1\right)\bbE[\log\|\bfY_i \|^2] \nonumber\\
	&\quad+\log\frac{\pi^{\frac{n}{2}}}{\Gamma\left(\frac{n}{2}\right)}- \frac{n}{2}\log(2\pi e \sigma_i^2) \label{eq:square} \\
	&=h\left(\sigma_i^2 \left\|\frac{\bfX}{\sigma_i}+\widetilde{\bfN}_i  \right\|^2\right) \nonumber\\
	&\quad+\left(\frac{n}{2}-1\right)\bbE\left[\log\left(\sigma_i^2\left\|\frac{\bfX}{\sigma_i}+\widetilde{\bfN}_i  \right\|^2\right)\right] \nonumber\\
	&\quad+\log\frac{\pi^{\frac{n}{2}}}{\Gamma\left(\frac{n}{2}\right)}- \frac{n}{2}\log(2\pi e \sigma_i^2) \label{eq:chi_square_introduction} \\
	&=h\left( \left\|\frac{\bfX}{\sigma_i}+\widetilde{\bfN}_i  \right\|^2\right)+\left(\frac{n}{2}-1\right)\bbE\left[\log\left\|\frac{\bfX}{\sigma_i}+\widetilde{\bfN}_i  \right\|^2\right] \nonumber\\
	&\quad- \log\left((2 e)^\frac{n}{2}  \Gamma\left(\frac{n}{2}\right)\right)
	\end{align}
	where~\eqref{eq:spherical} holds by~\cite[Lemma~6.17]{lapidoth2003capacity} and by independence between $\|\bfY_i \|$ and $\frac{\bfY_i}{\|\bfY_i \|}$; the term $h_\lambda(\cdot)$ is a differential entropy-like quantity for random vectors on the $n$-dimensional unit sphere  \cite[Lemma~6.16]{lapidoth2003capacity}; \eqref{eq:square} holds because $\frac{\bfY_i}{\|\bfY_i \|}$ is uniform on the unit sphere and thanks to~\cite[Lemma~6.15]{lapidoth2003capacity}; and in~\eqref{eq:chi_square_introduction} we have $\widetilde{\bfN}_i\sim {\cal N}(\mathbf{0}_n,\bfI_n)$. It is now immediate to write the secrecy-density as follows:
	\begin{equation}
	\Xi(\|\bfx \|;P_{\bfX}) = i_1(\|\bfx \|;P_{\bfX})-i_2(\|\bfx \|;P_{\bfX})
	\end{equation}
	where
	\begin{align}
	&i_j(\|\bfx \|;P_{\bfX}) \nonumber\\
	&= -\int_0^{\infty} f_{\chi^2_{n}(\frac{\|\bfx \|^2}{\sigma_j^2})}(y) \log \frac{\int_0^\sfR f_{\chi^2_{n}(\frac{t^2}{\sigma_j^2})}(y)dP_{\|\bfX\|}(t)}{y^{\frac{n}{2}-1}}dy \nonumber\\
	&\quad - \log\left((2e)^{\frac{n}{2}} \Gamma\left( \frac{n}{2} \right) \right),
	\end{align}
	for $j\in\{1,2\}$.

	Given two values $\rho_1,\rho_2$ with $\rho_1>\rho_2$, write
	\begin{align}
	&i_j(\rho_1;P_{\bfX})-i_j(\rho_2;P_{\bfX}) \nonumber\\
	 &= \int_{0}^{\infty} \left(f_{\chi^2_{n}(\frac{\rho_1^2}{\sigma_j^2})}(y)-f_{\chi^2_{n}(\frac{\rho_2^2}{\sigma_j^2})}(y)\right)\log\frac{y^{\frac{n}{2}-1}}{f_{\|\frac{\bfY}{\sigma_j}\|^2}(y;P_{\bfX})} dy \\
	&=\int_{0}^{\infty} \left(F_{\chi^2_{n}(\frac{\rho_2^2}{\sigma_j^2})}(y)-F_{\chi^2_{n}(\frac{\rho_1^2}{\sigma_j^2})}(y)\right)\frac{d}{dy}\log\frac{y^{\frac{n}{2}-1}}{f_{\|\frac{\bfY}{\sigma_j}\|^2}(y;P_{\bfX})} dy \label{eq:der1}
	\end{align}
	where we have integrated by parts. Now notice that
	\begin{equation}
	\int_{0}^{\infty} \left(F_{\chi^2_{n}(\frac{\rho_2^2}{\sigma_j^2})}(y)-F_{\chi^2_{n}(\frac{\rho_1^2}{\sigma_j^2})}(y)\right) dy = \frac{\rho_1^2 - \rho_2^2}{\sigma_j^2}. \label{eq:auxpdf}
	\end{equation}
	Since $\chi^2_{n}(\frac{\rho_1^2}{\sigma_j^2})$ statistically dominates $\chi^2_{n}(\frac{\rho_2^2}{\sigma_j^2})$, the integrand function in \eqref{eq:auxpdf} is always positive. We can introduce an auxiliary output random variable $Q_j$, for $j\in\{1,2\}$, with pdf
	\begin{equation}\label{eq:def_fQ}
	f_{Q_j}(y;\rho_1,\rho_2) =\frac{\sigma_j^2}{\rho_1^2 - \rho_2^2} \left(F_{\chi^2_{n}(\frac{\rho_2^2}{\sigma_j^2})}(y)-F_{\chi^2_{n}(\frac{\rho_1^2}{\sigma_j^2})}(y)\right), 
	\end{equation} 
	for $y>0$,	to rewrite \eqref{eq:der1} as follows:
	\begin{align}
	&i_j(\rho_1;P_{\bfX})-i_j(\rho_2;P_{\bfX}) \nonumber\\
	 &= -\frac{\rho_1^2-\rho_2^2}{\sigma_j^2}\int_{0}^{\infty} f_{Q_j}(y;\rho_1,\rho_2)\frac{d}{dy}\log\frac{f_{\|\frac{\bfY}{\sigma_j}\|^2}(y;P_{\bfX})}{y^{\frac{n}{2}-1}} dy.\label{eq:afterQ}
	\end{align}
	We evaluate the derivative in \eqref{eq:afterQ} as:
	\begin{align}
	&\frac{d}{dy}\log\frac{f_{\|\frac{\bfY}{\sigma_j}\|^2}(y;P_{\bfX})}{y^{\frac{n}{2}-1}} \nonumber\\
	&= \frac{y^{\frac{n}{2}-1}}{f_{\|\frac{\bfY}{\sigma_j}\|^2}(y;P_{\bfX})}\int_{0}^{\sfR} \frac{d}{dy} \frac{f_{\chi^2_{n}(\frac{t^2}{\sigma_j^2})}(y)}{y^{\frac{n}{2}-1}} dP_{\|\bfX\|}(t) \label{eq:fy} \\
	&=\frac{y^{\frac{n}{2}-1}}{f_{\|\frac{\bfY}{\sigma_j}\|^2}(y;P_{\bfX})}\nonumber\\
	&\quad\int_{0}^{\sfR} \left(\frac{f_{\chi^2_{n-2}(\frac{t^2}{\sigma_j^2})}(y)}{2y^{\frac{n}{2}-1}}-\left(\frac{1}{2}+\frac{\frac{n}{2}-1}{y}\right)\frac{f_{\chi^2_{n}(\frac{t^2}{\sigma_j^2})}(y)}{y^{\frac{n}{2}-1}}\right) dP_{\|\bfX\|}(t) \label{eq:der_chisquare} \\
	&=\bbE\left[\frac{1}{2} \frac{f_{\chi^2_{n-2}(\frac{\|\bfX\|^2}{\sigma_j^2})}(\frac{\left\|\bfY\right\|^2}{\sigma_j^2})}{f_{\chi^2_{n}(\frac{\|\bfX\|^2}{\sigma_j^2})}(\frac{\left\|\bfY\right\|^2}{\sigma_j^2})}-\left(\frac{1}{2}+\frac{\frac{n}{2}-1}{\frac{\left\|\bfY\right\|^2}{\sigma_j^2}}\right) | \frac{\left\|\bfY\right\|^2}{\sigma_j^2}=y\right] \\
	&=\bbE\left[\frac{1}{2}\frac{\|\bfX\|}{\|\bfY\|} \frac{\sfI_{\frac{n}{2}-2}(\frac{\|\bfX\| \|\bfY\|}{\sigma_j^2})}{\sfI_{\frac{n}{2}-1}(\frac{\|\bfX\| \|\bfY\|}{\sigma_j^2})}-\left(\frac{1}{2}+\frac{\frac{n}{2}-1}{\frac{\left\|\bfY\right\|^2}{\sigma_j^2}}\right) | \frac{\left\|\bfY\right\|^2}{\sigma_j^2}=y\right] \label{eq:derlog} \\
	&=\bbE\left[\frac{1}{2}\frac{\|\bfX\|}{\|\bfY\|}\sfh_{\frac{n}{2}}\left(\frac{\|\bfX\| \|\bfY\|}{\sigma_j^2}\right)-\frac{1}{2} |   \frac{\left\|\bfY\right\|^2}{\sigma_j^2}=y\right] \label{eq:recurrence_relation}
	\end{align} 
	where in \eqref{eq:fy} we used
	\begin{equation}
	f_{\|\frac{\bfY}{\sigma_j}\|^2}(y;P_{\bfX}) = \int_{0}^{\sfR} f_{\chi^2_{n}(\frac{t^2}{\sigma_j^2})}(y) dP_{\|\bfX\|}(t);
	\end{equation} in \eqref{eq:der_chisquare} we used the relationship
	\begin{equation} \label{eq:derivative_chisquared}
	\frac{d}{dy} f_{\chi^2_{n}(\rho^2)}(y) = \frac{1}{2} f_{\chi^2_{n-2}(\rho^2)}(y)-\frac{1}{2} f_{\chi^2_{n}(\rho^2)}(y);
	\end{equation}
	and \eqref{eq:recurrence_relation} follows from the recurrence relationship
	\begin{equation}
	\sfI_{\nu-1}(z)-\sfI_{\nu+1}(z)=\frac{2\nu}{z}\sfI_\nu(z).
	\end{equation}
	Putting together \eqref{eq:afterQ} and \eqref{eq:recurrence_relation} we get
	\begin{align}
	&i_j(\rho_1;P_{\bfX})-i_j(\rho_2;P_{\bfX}) \\
	 &= -\frac{\rho_1^2-\rho_2^2}{2\sigma_j^2} \bbE\left[\bbE\left[\frac{\|\bfX\|}{\|\bfY\|}\sfh_{\frac{n}{2}}\left(\frac{\|\bfX\| \|\bfY\|}{\sigma_j^2}\right)-1| \frac{\left\|\bfY\right\|^2}{\sigma_j^2}=Q_j\right]\right].
	\end{align}
	We are now in the position to compute the derivative of the information density as:
	\begin{align}
	&i_j'(\rho;P_{\bfX}) \nonumber\\
	 &= \lim_{h\rightarrow 0} \frac{i_j(\rho+h;P_{\bfX})-i_j(\rho;P_{\bfX})}{h} \\
	&=-\frac{\rho}{\sigma_j^2}\: \bbE\left[\bbE\left[\frac{\|\bfX\|}{\|\bfY\|}\sfh_{\frac{n}{2}}\left(\frac{\|\bfX\| \|\bfY\|}{\sigma_j^2}\right)-1 | \frac{\left\|\bfY\right\|^2}{\sigma_j^2}=Q'\right]\right]
	\end{align}
	where $Q'\sim \chi^2_{n+2}(\frac{\rho^2}{\sigma_j^2})$ thanks to  Lemma \ref{Lemma:fQj}.
	
	The final result is obtained by letting
	\begin{align}
	\Xi'(\|\bfx \|;P_{\bfX}) = i_1'(\|\bfx \|;P_{\bfX})-i_2'(\|\bfx \|;P_{\bfX})
	\end{align}
	and by specializing the result to the input $P_{\bfX_{\sfR}}$.
\end{proof}
\begin{lemma}\label{Lemma:fQj}
	We have
	\begin{equation}
	\lim_{h\rightarrow 0} f_{Q_j}(y;\rho+h,\rho) = f_{\chi^2_{n+2}(\frac{\rho^2}{\sigma_j^2})}(y), \qquad y>0.
	\end{equation}
\end{lemma}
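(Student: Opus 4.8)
The plan is to recognize the $h\to 0$ limit as a derivative of the noncentral chi-square CDF with respect to its \emph{noncentrality} parameter. Starting from the definition \eqref{eq:def_fQ} with $\rho_1=\rho+h$ and $\rho_2=\rho$, and using $(\rho+h)^2-\rho^2=h(2\rho+h)$, I would rewrite
\begin{equation}
f_{Q_j}(y;\rho+h,\rho) = -\frac{\sigma_j^2}{2\rho+h}\cdot\frac{F_{\chi^2_{n}(\frac{(\rho+h)^2}{\sigma_j^2})}(y)-F_{\chi^2_{n}(\frac{\rho^2}{\sigma_j^2})}(y)}{h}.
\end{equation}
Letting $h\to 0$ and applying the chain rule (the factor $2\rho/\sigma_j^2$ from differentiating $\rho\mapsto \rho^2/\sigma_j^2$ cancels the $\sigma_j^2/(2\rho)$ in front), the limit equals $-\frac{\partial}{\partial\lambda}F_{\chi^2_{n}(\lambda)}(y)$ evaluated at $\lambda=\rho^2/\sigma_j^2$. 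Thus everything reduces to identifying this noncentrality-derivative.

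The core of the argument is the identity
\begin{equation}
\frac{\partial}{\partial\lambda}F_{\chi^2_{n}(\lambda)}(y) = -f_{\chi^2_{n+2}(\lambda)}(y),
\end{equation}
which I would establish from the Poisson-mixture representation of the noncentral chi-square. Writing $f_{\chi^2_{n}(\lambda)}(y)=\sum_{k=0}^{\infty}p_k(\lambda)\,f_{\chi^2_{n+2k}}(y)$ with weights $p_k(\lambda)=e^{-\lambda/2}(\lambda/2)^k/k!$, the elementary relation $p_k'(\lambda)=\tfrac12\bigl(p_{k-1}(\lambda)-p_k(\lambda)\bigr)$ together with a shift of the summation index yields the density relation $\frac{\partial}{\partial\lambda}f_{\chi^2_{n}(\lambda)}(y)=\tfrac12\bigl(f_{\chi^2_{n+2}(\lambda)}(y)-f_{\chi^2_{n}(\lambda)}(y)\bigr)$. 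Combining this with the $y$-recurrence \eqref{eq:derivative_chisquared} already used in the proof of Lemma~\ref{Lemma:derivative_Xi}, taken at $n+2$ degrees of freedom, namely $\frac{d}{dy}f_{\chi^2_{n+2}(\lambda)}(y)=\tfrac12\bigl(f_{\chi^2_{n}(\lambda)}(y)-f_{\chi^2_{n+2}(\lambda)}(y)\bigr)$, gives the clean cancellation $\frac{\partial}{\partial\lambda}f_{\chi^2_{n}(\lambda)}(y)=-\frac{\partial}{\partial y}f_{\chi^2_{n+2}(\lambda)}(y)$.

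To finish, I would integrate this last identity in $y$ from $0$ to the point of interest, commuting the $\lambda$-derivative with the $y$-integral, to obtain $\frac{\partial}{\partial\lambda}F_{\chi^2_{n}(\lambda)}(y)=-\bigl(f_{\chi^2_{n+2}(\lambda)}(y)-f_{\chi^2_{n+2}(\lambda)}(0^+)\bigr)$; since $n+2\ge 3$ the chi-square density vanishes at the origin, the boundary term drops, and the claimed identity follows, whence $\lim_{h\to 0}f_{Q_j}(y;\rho+h,\rho)=f_{\chi^2_{n+2}(\frac{\rho^2}{\sigma_j^2})}(y)$. The main obstacle is the rigorous justification of the two interchanges—term-by-term $\lambda$-differentiation of the Poisson series and the swap of the $\lambda$-derivative with the $y$-integral—both of which follow from uniform convergence of the series and its $\lambda$-derivative on compact noncentrality intervals, together with the rapid decay of the chi-square densities in $y$. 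The only remaining delicate point is the behavior at $y=0$, which is harmless precisely because $n\ge 1$ forces the $(n+2)$-degree density to vanish there.
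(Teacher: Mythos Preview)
Your proposal is correct and follows essentially the same approach as the paper: both recognize the $h\to 0$ limit as a derivative in the noncentrality parameter, expand via the Poisson-mixture representation, use $p_k'(\lambda)=\tfrac12(p_{k-1}-p_k)$ to obtain $\tfrac12(f_{\chi^2_{n+2}(\lambda)}-f_{\chi^2_{n}(\lambda)})$, and then invoke the $y$-recurrence \eqref{eq:derivative_chisquared} to integrate back to the CDF level. The only cosmetic difference is that you first isolate the clean identity $\partial_\lambda F_{\chi^2_n(\lambda)}(y)=-f_{\chi^2_{n+2}(\lambda)}(y)$ and explicitly discuss the boundary term at $y=0$ and the interchange of limits, whereas the paper performs the same computations directly inside the integral without naming the identity.
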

\begin{proof}
	Thanks to the definition \eqref{eq:def_fQ}, we have
	\begin{align}
	&\lim_{h\rightarrow 0} f_{Q_j}(y;\rho+h,\rho) \nonumber\\
	 &= \lim_{h\rightarrow 0} \frac{\sigma_j^2}{h(2\rho+h)}\left(F_{\chi^2_{n}(\frac{\rho^2}{\sigma_j^2})}(y)-F_{\chi^2_{n}(\frac{(\rho+h)^2}{\sigma_j^2})}(y)\right) \\
	&=\lim_{h\rightarrow 0} \frac{\sigma_j^2}{h(2\rho+h)} \int_{0}^{y} \left(f_{\chi^2_{n}(\frac{\rho^2}{\sigma_j^2})}(t)-f_{\chi^2_{n}(\frac{(\rho+h)^2}{\sigma_j^2})}(t)  \right) dt \\
	&= \frac{\sigma_j^2}{2\rho}\int_{0}^{y} \sum_{i=0}^{\infty} \lim_{h\rightarrow 0} \frac{1}{h} \nonumber\\
	&\quad\left(\frac{\rme^{-\frac{\rho^2}{2\sigma_j^2}}\left(\frac{\rho^2}{2\sigma_j^2}\right)^i}{i!}-\frac{\rme^{-\frac{(\rho+h)^2}{2\sigma_j^2}}\left(\frac{(\rho+h)^2}{2\sigma_j^2}\right)^i}{i!} \right) f_{\chi^2_{n+2i}}(t) dt \label{eq:Poisson_representation} \\
	&= \frac{\sigma_j^2}{2\rho}\int_{0}^{y} \sum_{i=0}^{\infty} \frac{d}{d\rho} \left(\frac{\rme^{-\frac{\rho^2}{2\sigma_j^2}}\left(\frac{\rho^2}{2\sigma_j^2}\right)^i}{i!}\right) f_{\chi^2_{n+2i}}(t) dt \\
	&= \frac{1}{2}\int_{0}^{y} \sum_{i=0}^{\infty}  \left(-\frac{\rme^{-\frac{\rho^2}{2\sigma_j^2}}\left(\frac{\rho^2}{2\sigma_j^2}\right)^i}{i!}\right.\nonumber\\
	&\qquad\qquad\left. +\frac{\rme^{-\frac{\rho^2}{2\sigma_j^2}}\left(\frac{\rho^2}{2\sigma_j^2}\right)^{i-1}}{(i-1)!}{1}(i\ge 1)\right) f_{\chi^2_{n+2i}}(t) dt	\\
	&= \frac{1}{2}\int_{0}^{y}  \left(-f_{\chi^2_{n}(\frac{\rho^2}{\sigma_j^2})}(t)+f_{\chi^2_{n+2}(\frac{\rho^2}{\sigma_j^2})}(t)\right)  dt \\
	&=\int_{0}^{y}  \frac{d}{dt}f_{\chi^2_{n+2}(\frac{\rho^2}{\sigma_j^2})}(t)  dt \label{eq:der_chi_squared} \\
	&=f_{\chi^2_{n+2}(\frac{\rho^2}{\sigma_j^2})}(y),
	\end{align}
	where ${1}(\cdot)$ is the indicator function; in \eqref{eq:Poisson_representation} we used the Poisson-weighted mixture representation of the noncentral chi-square pdf; and in \eqref{eq:der_chi_squared} we used \eqref{eq:derivative_chisquared}.
\end{proof}

\bibliographystyle{IEEEtran}
\bibliography{refs.bib}
\end{document}